\renewcommand\footnotetextcopyrightpermission[1]{} 
\tikzset{
  abstraction/.style={
    draw,
    rectangle,
    rounded corners,
    inner sep=.15cm,
    minimum
    height=.5cm,
    minimum width=1cm,
    align=left,
    anchor=north west,
  },
}
\newcommand{\SB}[1]{}
\newcommand{\AD}[1]{}
\newcommand{\ST}[1]{}
\newcommand{\AG}[1]{}
\newcommand{\ie}{i.e.}
\newcommand{\eg}{e.g.}
\newcommand{\etal}{\emph{et al.}}
\newcommand{\seq}[1]{\overline{#1}}
\newcommand{\seqAt}[2]{{#1}^{#2}}
\newcommand{\seqUpTo}[2]{\overline{#1^{#2}}}
\newcommand{\seqPr}[2]{\overline{#1}_{#2}}
\newcommand{\seqPrAt}[3]{{#1}_{#2}^{#3}}
\newcommand{\seqPrUpTo}[3]{\overline{#1^{#3}_{#2}}}
\newcommand{\fn}[1]{\ensuremath{\mathsf{#1}}}
\newcommand{\validTxs}{\fn{validLog}}
\newcommand{\validBlocks}{\fn{validChain}}
\newcommand{\fusion}{\fn{fusion}}
\newcommand{\submit}{\mathit{submit}}
\newcommand{\request}{\ensuremath{\mathit{request}}}
\newcommand{\inputBlock}{\mathit{input}}
\newcommand{\outputBlock}{\mathit{output}}
\newcommand{\readChain}{\mathit{read}}
\newcommand{\client}{\ensuremath{\mathit{client}}}
\newcommand{\replica}{\ensuremath{\mathit{replica}}}
\newcommand{\subProposal}{\mathit{subProposal}} 
\newcommand{\tx}{\ensuremath{\mathit{tx}}}
\newcommand{\RC}{\ensuremath{\mathrm{RC}}}
\newcommand{\FRC}{\ensuremath{\mathrm{FRC}}}
\newcommand{\pool}{\ensuremath{\mathit{pool}}}
\newcommand{\txs}{\ensuremath{\mathit{txs}}}
\newcommand{\done}{\ensuremath{\mathit{done}}}
\newcommand{\prop}{\ensuremath{\mathit{prop}}} 
\newcommand{\QC}{\ensuremath{\mathit{QC}}} 
\newcommand{\assign}{\leftarrow}
\newcommand{\append}{{}^\frown}
\newcommand{\concat}{\mathbin{::}}
\newcommand{\REQ}{\ensuremath{\texttt{REQ}}}
\newcommand{\SUBPROP}{\ensuremath{\texttt{SUB-PROP}}}
\newcommand{\SubAlgo}[2]{#1 \SubAlgoBlock{#2}}
\definecolor{dimgray}{gray}{0.50}
\definecolor{myredcol}{rgb}{0.9,0.17,0.31}   
\definecolor{mybluecol}{rgb}{0.0,0.58,0.71}  
  \providecommand\BibTeX{{%
    \normalfont B\kern-0.5em{\scshape i\kern-0.25em b}\kern-0.8em\TeX}}}
\begin{document}

\title{SoK: Achieving State Machine Replication in Blockchains based on Repeated Consensus}

\author{Silvia Bonomi}
\email{bonomi@diag.uniroma1.it}
\orcid{0001-9928-5357}
\affiliation{%
  \institution{Sapienza University}
  \city{Rome}
  \country{Italy}
}

\author{Antonella Del Pozzo}
\email{antonella.delpozzo@cea.fr}
\orcid{0003-0913-2141}
\affiliation{%
  \institution{Université Paris-Saclay, CEA, List}
  \city{Palaiseau}
  \country{France}
  \postcode{F-91120}
}

\author{Álvaro García-Pérez}
\email{alvaro.garciaperez@cea.fr}
\orcid{0002-9558-6037}
\affiliation{%
  \institution{Université Paris-Saclay, CEA, List}
  \city{Palaiseau}
  \country{France}
  \postcode{F-91120}
}

\author{Sara Tucci-Piergiovanni}
\email{sara.tucci@cea.fr}
\orcid{0001-9738-9021}
\affiliation{%
  \institution{Université Paris-Saclay, CEA, List}
  \city{Palaiseau}
  \country{France}
  \postcode{F-91120}
}


\begin{abstract}
  This paper revisits the ubiquitous problem of achieving state machine replication in
  blockchains based on repeated consensus, like Tendermint. To achieve state machine replication in  blockchains built on top of consensus, one needs to guarantee fairness of user transactions.
  A huge body of work has been carried out on the
  relation between state machine replication and consensus in the past years, in a variety
  of system models and with respect to varied problem specifications.  
  We systematize this work by proposing novel and rigorous abstractions for state machine replication and repeated consensus in a system model that
  accounts for realistic blockchains in which blocks may contain several transactions
  issued by one or more users, and where  validity and order of transactions within a block is determined by an external application-dependent
  function that can capture various approaches for order-fairness in the
  literature. \AG{Revise and tone down if needed.} Based on these abstractions, we propose a reduction from state machine replication to repeated consensus, such that user fairness is achieved using the consensus module as a black box. This approach allows to achieve fairness as an add-on on top of preexisting consensus modules in blockchains based on repeated consensus.   
\end{abstract}


\begin{CCSXML}
<ccs2012>
<concept>
<concept_id>10003752.10003809.10010172</concept_id>
<concept_desc>Theory of computation~Distributed algorithms</concept_desc>
<concept_significance>500</concept_significance>
</concept>
</ccs2012>
\end{CCSXML}

\ccsdesc[500]{Theory of computation~Distributed algorithms}

\keywords{blockchain, repeated consensus, censorship resistance, state machine
  replication, fairness}

\maketitle
\pagestyle{plain} 

\section{Introduction}\label{sec:intro}

The rise of blockchains \cite{algorand,cosmos,Nakamoto2008,Wood14} in the last decade has
revived the interest in distributed programming and Byzantine fault-tolerant systems. A
blockchain \cite{Nakamoto2008} is an append-only, tamper-resistant distributed ledger of
transactions organised in a chain of blocks, originally used for
cryptocurrencies. Ethereum \cite{Wood14} popularised the concept of smart contracts as a
mean to implement any application on top of a blockchain. These first proposals, however,
have been criticised because the underlying blockchain protocol used to update the ledger,
based on proof-of-work, is extremely energy consuming. New proposals then subsequently
emerged proposing alternatives to proof-of-work, including a notable number using standard
Byzantine fault-tolerant distributed consensus (BFT consensus for short)
\cite{LSP82,pbft}, for both the permissioned setting
\cite{Gramoli17,Sousa18,hotstuff,libraSMR} and the public permissioned one
\cite{algorand,cosmos,tendermint,thunderella}. This paper focuses on the
alternatives in the permissioned and public permissioned setting. Thanks to the power of
consensus, this blockchains are able to guarantee so-called absolute finality, which
ensures that no value inserted into the ledger can ever be revoked. It has been shown that
these blockchains, even though rarely accompanied by formal distributed systems
specifications, can be built over standard distributed computing abstractions such as
state machine replication \cite{lamport78} (SMR for short) or repeated consensus
\cite{DDFPT08} (RC for short). Examples of such systems are Hyperledger \cite{Sousa18},
which builds a distributed ledger over SMR (in Hyperledger there is no notion of blocks), and Tendermint \cite{tendermint,cosmos}, Tenderbake \cite{tenderbake} and Algorand \cite{algorand}, which
build a blockchain over RC.

In case of a distributed ledger built on top of an SMR service (DL-SMR for short),
transactions are interpreted as commands that are submitted directly by clients to the SMR
service, while in case of a blockchain built on top of an RC service (BC-RC for short),
transactions are collected by replicas into blocks, which are the input values that
replicas propose to the RC service. Although the two approaches may seem equivalent at
first sight since both produce a sequence of commands/transactions, they differ however in
their ability to prevent \emph{censorship}. In DL-SMR the SMR service decides on the order
in which all the commands submitted by clients are eventually served. On the other hand,
In BC-RC the RC service makes a series of \textit{independent} decisions on each new block
that is committed to the chain. The result is that in BC-RC a transaction submitted by a
client could in principle be censored if replicas skip to include it into the blocks that
they input to the underlying RC service.\footnote{As described in \cite{Vit15}, censorship
  has an impact on financial mechanisms such as \emph{contracts for difference}, on
  \emph{auditable computation}, and even on committee-based BFT consensus protocols, where
  censorship may enable a collusion of validators to prevent other validators from joining
  the consensus pool.} Thus, client requests are never guaranteed to be committed to the
sequence, and we say that RC lacks  user fairness. Informally,  user fairness guarantees that a transaction requested by a
correct client will eventually be committed to the blockchain, unless the transaction
becomes invalid by the insertion of other transactions into the blockchain while the
requested transaction is still pending (see \S\ref{sec:dlsmr} for a formal definition). As
customary \cite{CKPS01,redbelly17,TendermintCorrectness,tenderbake} validity is defined
externally by a deterministic, application-dependent predicate.

The lack of  user fairness above is pervasive in the setting of blockchains, since
many of them are built on top of an RC service (\eg\
\cite{dissecting,tendermint,cosmos,libraSMR,hotstuff,algorand,thunderella} \AG{I included
  HotStuff, Libra and Thunderella here, shall we include others?}) and are therefore
potentially vulnerable to censorship. As a consequence, these blockchains can only ensure
user fairness under the assumption that all the proposers are correct. 

Some of these blockchains \cite{dissecting,tendermint,libraSMR,hotstuff} improve the
fairness guarantee by rotating the proposer, but this does not achieve  user
fairness, because there is no mechanism that ensures that an infinite number of correct proposers will be able to get their proposal agreed. 

\AD{Just to give an example, hunderds of applications and services are build on top of
  Cosmos (based on Tendermint)
  \url{https://cosmos.network/ecosystem/apps/?ref=cosmonautsworld} and Algorand
  \url{https://www.algorand.com/ecosystem/use-cases})}

\AG{Libra pipelines certified children in a tree, and decdides monotonically on the tree,
  which is not strictly an RC apporach because there's no new independent consensus
  instance. Other approaches like Streamlet are not RC-based either. Explore whether our
  solution works on Libra and Streamlet in future work? Can these be described by the
  framework in \cite{RZS10}?}

\AG{Add something on the formalisation being a contribution too, before moving to
  Problem.}

\AG{Something on the reduction and the optimised transformation.}

\paragraph{Problem} We tackle the problem of providing user fairness to
blockchains built on top of RC in a seamless way. By seamless we mean that the property
must be provided no matter the underlying consensus protocol implementing RC. Let us
stress that, since blockchains are running systems today, techniques to implement them as
replicated objects should be as lightweight and generic as possible to avoid to rewrite
from scratch current running BFT consensus implementations.

\paragraph{Contributions}

\AG{This needs rewritting, but is more mechanical than the paragraphs above, which should
  convey the scope and our solution.}

\AG{Case against BC-RC? Some reviewer identified that as a strenght. See how to do this
  diplomaticly.}

We systematise previous work by providing, for the first time, \textit{complete and
  rigorous} specifications (properties and algorithms) of both DL-SMR and BC-RC
constructions. We then prove that BC-RC does not solve DL-SMR in general, since the former
construction lacks the user fairness property of the latter. To cover this gap, we
introduce an abstraction for \emph{fair repeated consensus} (FRC for short) that refines
the RC abstraction, and we show that a correct implementation of FRC can be used to
implement a blockchain built on top of it (BC-FRC for short) with user fairness. We then
propose a reduction from FRC to RC, by aggregating \textit{sub-proposals} that a replica
hears of into a single proposal before inputting it to RC. We show that aggregating
sub-proposals from $f+1$ or more contributors provides a form of \emph{internal validity}
(as opposed to the external validity of
\cite{CKPS01,redbelly17,TendermintCorrectness,tenderbake}) that guarantees that the
decided block contains at least one transaction that was proposed by a correct replica,
which is an important ingredient to establishing that BC-FRC upholds user fairness.

\AG{We identify sufficent and neccesary conditions for User fairness, stress it.}

\AG{Sell that our FRC allows to represent very general models in which transactions could
  become valid/invalid later.}

Although the techniques used in this reduction are not entirely novel, our solution is, to the best
of our knowledge, the first one agnostic to the BFT consensus protocol in the underlying
implementation of RC, and thus truly generic.


\AG{The neccessary conditions let you check whether an implementation meets SMR. Cosmos,
  which is not equal to Tendermint (discuss) does not meet our nececssary conditions. Let
  be potilitical with Tendermint people, they're friends...}

\paragraph{Roadmap} In \S\ref{sec:preliminaries} we state the system model and
introduce some background on blockchains. In \S\ref{sec:constructions} we rigorously
specify the DL-SMR and BC-RC constructions, and we analyse the relation between them and
show that DL-SMR is strictly stronger that BC-RC. In order to close this gap, in
\S\ref{sec:achieving-fairness} we introduce the FRC abstraction and teh BC-FRC construction. We provide a reduction from FRC to RC and show that DL-SMR and BC-FRC
are equivalent.
In \S\ref{sec:related-work} we discuss previous definitions of
fairness and existing techniques to achieve user fairness. In \S\ref{sec:conclusion} by provide some concluding remarks.

\section{Preliminary definitions}\label{sec:preliminaries}

\subsection{System model}\label{sec:system-model}
We consider a message-passing distributed system composed of a potentially infinite set of
processes $\Pi$. A process can have any of the roles of \emph{client} or \emph{replica},
\ie, the set $\Pi$ can be partitioned into two (not necessarily disjoint) sets of clients
and replicas respectively, and we assume that the cardinality of the latter set of
replicas is always finite. The rest of our assumptions are conventional in the Byzantine
model. We assume that the network is \emph{partially synchronous}, \ie, the system
operates asynchronously until some unknown Global Stabilisation Time (GST), after which,
the system respects known and finite time bounds for computation and
communication. 
Processes are equipped with a cryptographic primitive to sign all messages they send and
we assume that signatures are not forgeable. In addition, we assume that channels are
reliable and thus processes communicate with each other through authenticated perfect
links. Processes may experience Byzantine failures, i.e., they may behave arbitrarily by
omitting to send and receive messages, and by altering the content of messages. Such
processes are said to be \emph{faulty}. Processes that are not faulty are said to be
\emph{correct}. We let $n$ and $f$ be respectively the cardinality of the set of replicas
and of the set of faulty replicas. We assume that $f$ is strictly smaller than $n/3$, \ie,
less than a third of the replicas are faulty.

\subsection{Blockchain data structure}\label{sec:blockchain}

A blockchain (BC for short) is a distributed ledger with append-only and non-repudiation
properties that consists of a replicated data structure composed of blocks. Each block
contains a finite set of transactions and is placed into a sequence called \emph{chain}.

Clients may interact with the BC by issuing requests to
\begin{enumerate*}[label=(\roman*)]
\item read the content of the blockchain, or
\item insert a new transaction $\tx$ into the blockchain.
\end{enumerate*}
We focus on the latter requests, which are the ones that impact how blocks are created and
inserted into the chain. %
A client issues a \emph{transaction request} for transaction $\tx$ by invoking the
$\request(\tx)$ primitive.

Without loss of generality, we abstract away the notion of transaction, which may be a set
of financial transactions or a set of operation calls on generic data. To this aim, we
consider an application-defined validity condition on the transactions contained in the
blocks that ensures that the every transaction is consistent with the semantics of the
application implemented on top of the blockchain (\eg, no double spending in
cryptocurrency blockchains) and that must be verified before a block is decided and
inserted into the chain.

We consider \emph{permissioned} blockchains, \ie, blockchains where the set of replicas
who are in charge of maintaining the data structure is known \emph{a priori},\footnote{This
  assumption can be lifted by considering \emph{committee-based} blockchains that allow
  for public, open access, where the set of participants is dynamically determined by a
  selection function that maps the current chain to a \emph{committee} of replicas that
  are in charge of producing the next block \cite{cosmos, tenderbake}.} which ensures
absolute finality (i.e., once a block is appended to the chain, it can never be
revoked). A necessary condition to obtain absolute finality is that each new block
appended to the chain is agreed upon with a BFT consensus protocol \cite{Bt-adt}. (Both
BC-RC and DL-SMR use BFT consensus and fall under this category.)

Correct clients are assumed to collectively issue an infinite number of transaction
requests, but no client (correct or faulty) can issue an infinite number of requests in a
finite period of time. Clients may issue transaction requests with a throughput that is
higher than the one at which transactions can be allocated in the blockchain. We make no
assumption on the amount of memory that each replica maintains, but we assume blocks to
have bounded size.

We write $\seq{b}$ to denote a chain that consists of an infinite sequence of blocks
$[\seqAt{b}{0},\seqAt{b}{1},\seqAt{b}{2},\ldots]$, where $\seqAt{b}{i}$ denotes the $i$-th
block in the chain and $\seqUpTo{b}{i}$ denotes the prefix of the chain up to the $i$-th
block. We sometimes decorate this notation with a replica subindex $r$ to denote the chain
(respectively the $i$-th block and the prefix of the chain) stored locally at replica $r$,
as in $\seqPr{b}{r}$, $\seqPrAt{b}{r}{i}$ and $\seqPrUpTo{b}{r}{i}$. %
We write $\langle{\uparrow}b,[\tx_0,\tx_1,\ldots, tx_n]\rangle$ to denote a block that
points to block $b$ and contains the sequence of transactions $\tx_0$ to $\tx_n$. %
We write $\seqUpTo{b}{i}\append\txs$ and
$\seqUpTo{b}{i}\concat[\langle{\uparrow}\seqAt{b}{i},\txs\rangle]$ as synonyms, which
denote the chain obtained by concatenating to the prefix $\seqUpTo{b}{i}$ a block that
contains the pointer ${\uparrow}\seqAt{b}{i}$ to the last block in the prefix and the
sequence of transactions $\txs$. %


The initial block $\seqAt{b}{0}=\langle\bot,[]\rangle$, called the \emph{genesis block},
is a special block that is known a priory by each correct replica and which contains an
undefined pointer and an empty sequence of transactions.

Throughout the paper, we emphasise the data structure maintained by the replicas, which
consists of the chains stored locally at each replica. We let the current chain $\seq{b}$
be the longest common prefix of the chains that are stored at correct replicas, and assume
that correct clients have at their disposal a deterministic primitive $read()$ that
retrieves the current chain, whose details are out of the scope of this paper.

We let each transaction to be uniquely identified, \eg, by letting it be timestamped and
signed by the client that issued it. We say that a transaction $\tx$ issued by a client is
\emph{finalised} iff a block whose sequence of transactions contains $\tx$ is inserted
into the current chain. Otherwise we say that transaction $\tx$ is \emph{pending}. %
%


\subsection{Reducibility among distributed problems}\label{sec:reducibility}
We consider \emph{reducibility} among distributed problems as in Def.\ref{def:reduces-to} below. %
Consider two different distributed problems $A$ and $B$. Informally, $A$ can be reduced to $B$ if the existence of a solution of $B$ entails the existence of a solution of $A$. %
\begin{definition}\label{def:reduces-to}
Let $A$ and $B$ be two different distributed abstractions. $A$ \emph{reduces to} $B$ iff
there exists an algorithm that implements $A$ by making black-box use of $B$ and by using
asynchronous communication.
\end{definition}
The rationale behind this definition is that the algorithm that implements $A$ should not
rely on the partial synchronicity of the system, but only on the authenticated perfect
links provided by the process signatures and the asynchronous reliable channels in our
system model assumptions (see \S\ref{sec:system-model}). Since consensus is impossible in
a fully asynchronous setting due to the FLP impossibility \cite{flp}, this rationale
ensures that the algorithm implementing $A$ necessarily uses the black-box consensus
provided by $B$, instead of implementing consensus on its own.

We say that $B$ \emph{is at least as strong as} $A$ (written $A\subseteq B$) iff $A$
reduces to $B$ (\ie, existence of a correct implementation of $B$ guarantees the existence
of a correct implementation of $A$). We say $B$ is \emph{strictly stronger than} $A$
(written $A\subset B$) iff $A$ reduces to $B$ and $B$ is not reducible to $A$. We say $A$
and $B$ are \emph{equivalent} (written $A\equiv B$) iff $A$ reduces to $B$ and $B$ reduces
to $A$.



\section{Distributed ledger constructions}\label{sec:constructions}

In this section we provide rigorous specifications of the DL-SMR and BC-RC constructions.
We propose a general framework for building RC-BCs that use RC as a building block.

\subsection{Distributed ledger over SMR (DL-SMR)}
\label{sec:dlsmr}

We consider a variant of the state machine replication (SMR) abstraction in
\cite{synchHotstuffs,ANRX2020}. SMR specifies a strongly consistent replicated service
given by a deterministic state machine with a set of commands that can be issued by
clients. A set of replicas commits these commands into a linearisable log and produces a
consistent result of applying them, which is akin to the execution of the service by a
single correct replica that applies the commands in the order in which they occur in the
log.\footnote{\label{fn:linearisability}The definition of SMR in
  \cite{synchHotstuffs,ANRX2020} assumes that the operations can be linearised
  \cite{HW1990}, which imposes a one-to-one correspondence between the operations in a
  history and those in its linearisation. We reuse this intuition but note that
  linearisability cannot be lifted to a Byzantine setting just as readily
  \cite{MRL98,LisRod05}.}

Due to its nature, SMR can be directly used as a distributed ledger construction, to which
we refer as DL-SMR. \AG{Stress that the $\submit()$ primitive of SMR is a black box, whose
  details are out of the sope of the paper.} This can be achieved by viewing transactions
as commands and logs as states, and by considering that a transaction is finalised when it
is committed (\ie, appended) to the log.

We use the notation for sequences that we introduced in \S\ref{sec:blockchain} and write
$\seq{\ell}$ for the log, $\seqAt{\ell}{k}$ for the transaction at position $k$ of the
log, and $\seqUpTo{\ell}{k}$ for the prefix of the log up to position $k$. (Mind that a
position $k$ of the log refers to an individual transaction, as opposed to a block in a
blockchain, which may contain several transactions.) We may decorate the log $\ell$ with a
replica subindex $r$ to denote the log stored locally at replica $r$, as in
$\seqPr{\ell}{r}$, $\seqPrAt{\ell}{r}{k}$ and $\seqPrUpTo{\ell}{r}{k}$.

Clients issue transactions through the primitive $\submit(\tx)$, and a correct replica may
eventually commit some transaction $\tx$ for each position of the log given that $\tx$ is
valid to the replica. %
Following the conventions on distributed ledgers
\cite{CKPS01,redbelly17,TendermintCorrectness,tenderbake}, we define validity of a
transaction $\tx$ by employing an application-defined predicate $\validTxs()$ that takes
the log resulting by appending $\tx$ to the current log and returns true iff the
transactions in the log are consistent with the semantics of the application implemented
on top of the distributed ledger. We say that a transaction $\tx$ is \emph{valid to
  replica $r$ at position $k$} iff $\validTxs(\seqPrUpTo{\ell}{r}{k-1}\concat[\tx])$
holds. We may omit the $k$ and the $r$ and write ``valid to replica $r$'' or just ``valid'' when they are clear from the context.

After a correct replica $r$ commits transaction $\tx$, the replica applies the transaction
by appending $\tx$ to the replica's current log $\seqPrUpTo{\ell}{r}{k}$. We omit any
notification to the client since we put the emphasis on the data structure (the log)
maintained by each replica.


We require that DL-SMR meets the properties below:
\begin{description}
\item[{\rm (\emph{Safety})}] Two correct replicas do not commit different transactions at
  the same log position.
\item[{\rm (\emph{User fairness})}] Let a correct client issue a transaction $\tx$ that is
  valid to some correct replica~$r$ at the moment when $\tx$ is issued. Either $\tx$
  becomes invalid to $r$ at some posterior moment, or otherwise $\tx$ is eventually
  finalised.
\item[{\rm (\emph{Log validity})}] If $\seqPrUpTo{\ell}{r}{k}$ is the current log of a
  correct replica~$r$, then $\validTxs(\seqPrUpTo{\ell}{r}{k})$ holds.
\item[{\rm (\emph{Log finality})}] A correct replica commits at most one transaction for
  each position in the log.
\end{description}

\AG{Stress the generality of our user fairness.}

\emph{Safety} above corresponds to the property with the same name in
\cite{synchHotstuffs,ANRX2020}. Our \emph{User fairness} above characterises when a
transaction issued by a correct client will be finalised, which differs from \emph{User
  fairness} in \cite{ace2019} which states that the quality of the chain is such that at
least one half of the transactions that it contains has been issued by correct
clients.\footnote{In \cite{ace2019} they assume the fully asynchronous model which we do
  not, and thus our specification and theirs are incomparable in strength.} When assuming
that every client is correct and that every transaction is valid to every replica---\ie,
$\validTxs$ is the constant predicate that always returns true---then our \emph{User
  fairness} coincides with \emph{Liveness} in \cite{synchHotstuffs,ANRX2020}.\footnote{We
  observe however that although \emph{Liveness} in \cite{synchHotstuffs} states that
  ``every request is evenutally committed by honest replicas'', Thm.4 of that work only
  proves liveness for honest replicas, stated as ``all honest replicas keep committing
  requests''.} %
The last two properties of DL-SMR above correspond respectively to the \emph{Validity} and
\emph{Integrity} properties of \cite{ace2019}. %


For simplicity, when considered in the context of blockchain implementations, we let the
logs in DL-SMR correspond to chains by letting each transaction in the log to be contained
in a different block. Formally, a log $\seqUpTo{\ell}{k}$ of length $k$ corresponds to a
chain $\seqUpTo{b}{k+1}$ of length $k+1$ where $\seqAt{b}{0}$ is the genesis block which
does not contain any transactions and which is known to every correct replica, and each
block $\seqAt{b}{i}$ with $i>0$ points to $\seqAt{b}{i-1}$ and contains the single
transaction $\seqAt{\ell}{i-1}$, which is the transaction at position $i-1$ in the log. To
wit, the log $[\tx_0,\tx_1]$ corresponds to the chain
$[\langle\bot,[]\rangle]\append[\tx_0]\append[\tx_1]$.


Next, we discuss the alternative RC-BC construction, which is customary in the blockchain
setting. We start by presenting its RC building block.

\subsection{Repeated consensus (RC)}
\label{sec:rc}

We consider a variant of the repeated consensus (RC) abstraction in \cite{DDFPT08}. RC
consists of an infinite sequence of instances of BFT consensus which altogether decide an
infinite sequence of values. Without loss of generality, we will consider blocks as values
and let the sequence of decided values coincide with the chain. At each consensus instance
$i$, each correct replica inputs a block $b$ through the primitive $\inputBlock(i,b)$
(note that the block $b$ may be different from the inputs of other replicas) and
eventually outputs some decided block $b'$ which is appended to the local chain stored by
the replica. We let each replica sign the block that the replica inputs at each consensus instance, such that every participant could check the provenance of each decided block (the replica who proposed the block).

As before, we capture the validity of a decided block $b$ by employing an
application-defined predicate $\validBlocks()$ that takes the whole chain ending with $b$
and returns true iff the chain is well-formed and its transactions are consistent with the
semantics of the application implemented on top of the blockchain. We say that a block $b$
is \emph{valid to replica $r$ at consensus instance $i$} iff
$\validBlocks(\seqPrUpTo{b}{r}{i-1}\concat[b])$ holds. We may omit the $i$ and the $r$ and
write ``valid to replica $r$'' or just ``valid'' when they are clear from the context.

We require that RC meets the properties below:
\begin{description}
\item[{\rm (\emph{Agreement})}] If $\seqPr{b}{r}$ and $\seqPr{b}{s}$ are respectively the
  current outputs of two correct replicas~$r$ and~$s$, then either $\seqPr{b}{r}$ is a
  prefix of $\seqPr{b}{s}$, or $\seqPr{b}{s}$ is a prefix of $\seqPr{b}{r}$.
\item[{\rm (\emph{Termination})}] Every correct replica has an infinite output.
\item[{\rm (\emph{Chain validity})}] If $\seqPrUpTo{b}{r}{i}$ is the current output of a
  correct replica~$r$, then $\seqPrUpTo{b}{r}{i}$ is signed by some replica $r'$ (not necessarily different from $r$) and
  $\validBlocks(\seqPrUpTo{b}{r}{i})$ holds 
\item[{\rm (\emph{Chain finality})}] A correct replica outputs at most one block for each
  consensus instance.
\end{description}
\noindent The first two properties of RC adapt the properties with the same name in
\cite{DDFPT08}. \emph{Chain validity} states that each new block is well-formed and
upholds the semantics of the application implemented on top of the blockchain, and
\emph{Chain finality} collects the absolute finality assumption for permissioned
blockchains.



\subsection{Blockchain over RC (BC-RC)}
\label{sec:bcrc}


A \emph{blockchain over repeated consensus} (BC-RC) is a blockchain implemented on top of
a correct implementation of RC, for which we require the four properties of
\emph{Agreement}, \emph{Termination}, \emph{Chain validity} and \emph{Chain finality}, and
which additionally meets the properties below:
\begin{description}
\item (\emph{Valid request}) If a correct client sends a request to insert a transaction
  $\tx$ to some replica, then $\validBlocks(\seqUpTo{b}{i}\append[\tx])$ holds where
  $\seqUpTo{b}{i}$ is the current chain at the moment when the client sends the request.
\item (\emph{Valid input}) If a correct replica inputs a block $b$ for some consensus
  instance, then for each transaction $\tx$ contained in $b$ there exists some client from
  which the replica received a request to insert $\tx$.
\end{description}
The two addtional properties above characterise the interaction between BC-RC and the RC
building block, not the behaviour that is externally observable from the output of
BC-RC. In particular, \emph{Valid request} enforces that the requests received from a
correct client are valid with respect to the current chain at issuing time,\footnote{We
  assume a deterministic primitive procedure $\readChain()$ at client side that retrieves
  the current chain.}  and \emph{Valid input} enforces that correct replicas do not input
spurious transactions created out of thin air. \AG{Comment on the current chain and the
  possibility of retrieving an old one, which would be similar to linearising your
  operation in a previous moment on time, and would meet the proeprties.}

Alg.\ref{alg:rc-based-blockchain-client}--\ref{alg:rc-based-blockchain-replica}
respectively collect the code for the client and the replica of a generic BC-RC, which
takes a correct implementation of RC as parameter. At a correct client, the primitive
$\request(\tx)$ (Lin.\ref{lin:bcrc-request} of Alg.\ref{alg:rc-based-blockchain-client})
first retrieves the current chain $\seqUpTo{b}{i}$ (Lin.\ref{lin:bcrc-retrieve-current})
and then sends a request $\langle\REQ,\tx\rangle$ to insert transaction $\tx$ to all the
replicas, given that the transaction is valid with respect to the current chain
(Lin.\ref{lin:bcrc-send}), thus upholding \emph{Valid request}.

Each correct replica initialises a pool of pending transactions to the empty sequence
(Lin.\ref{lin:bcrc-init} of Alg.\ref{alg:rc-based-blockchain-replica}). Upon the reception
of a request $\langle\REQ,\tx\rangle$ from a client, the replica collects the transaction
$\tx$ into the pool
(Lin.\ref{lin:bcrc-receive-transaction}--\ref{lin:bcrc-enlarge-pool}). Once a new block is
output at consensus instance~$i$ (Lin.\ref{lin:bcrc-new-instance}) the replica first
clears the finalised transactions in the chain up to that block
(Lin.\ref{lin:bcrc-clear}), and assembles into a proposed block some of the valid
transactions in the pool (not necessarily all, since the size of blocks is
bounded)\footnote{We intentionally under-specify how the valid transactions are picked and
  the invalid ones are cleared from the pool, which may be application-defined.} and moves
to consensus instance $i+1$ by inputting the proposed block
(Lin.\ref{lin:bcrc-select}--\ref{lin:bcrc-input}). \emph{Valid input} holds since the
transactions have been picked from the pool and the pool collects transactions received
from clients. The implementation of RC provides the local chain $\RC.\seqPr{b}{r}$, and
the primitive procedure $\RC.\inputBlock(i,b)$ and notification $\RC.\outputBlock(i)$. We
assume that every replica starts its execution by notifying the output of the genesis
block at consensus instance $0$.

Alg.\ref{alg:rc-based-blockchain-client}--\ref{alg:rc-based-blockchain-replica} witness  
the reduction from BC-RC to RC.

\begin{algorithm}[t]
    \setcounter{AlgoLine}{0}
    \small
    \SubAlgo{\Procedure $\request(\tx)$\label{lin:bcrc-request}}
    {
      $\seqUpTo{b}{i}\assign{}$
      $\readChain()$\label{lin:bcrc-retrieve-current}\;
      \lIf{$\validBlocks(\seqUpTo{b}{i}\append[\tx])$}
      {send $\langle\REQ,\tx\rangle$ to every replica\label{lin:bcrc-send}}
    }
    \caption{BC-RC Client $c$}
    \label{alg:rc-based-blockchain-client}
  \end{algorithm}
  \begin{algorithm}[t]
    \setcounter{AlgoLine}{0}
    \small
    $\pool\assign []$\label{lin:bcrc-init}\tcp*{Initialisation}
    \SubAlgo{\Upon received $\langle\REQ,\tx\rangle$ from a client
      \label{lin:bcrc-receive-transaction}}
    {
      $\pool\assign\pool\concat[\tx]$\label{lin:bcrc-enlarge-pool}\;
    }
    \SubAlgo{\Upon $\RC.\outputBlock(i)$
      \label{lin:bcrc-new-instance}}
    {
      clear transactions in $\RC.\seqPrUpTo{b}{r}{i}$ from
      $\pool$\label{lin:bcrc-clear}\;
      $\txs\assign$ pick from $\pool$ up to block
      size and valid to $\RC.\seqPrUpTo{b}{r}{i}$\label{lin:bcrc-select}\;
      $\RC.\inputBlock(i+1,\langle{\uparrow}
      (\RC.\seqPrAt{b}{r}{i}),\txs\rangle)$\label{lin:bcrc-input}\;
    }
    \caption{BC-RC Replica $r$}
    \label{alg:rc-based-blockchain-replica}
\end{algorithm}

\begin{lemma}\label{lem:bcrc-reducible-rc}
  BC-RC is reducible to RC.
\end{lemma}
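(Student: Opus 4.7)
The plan is to exhibit Algorithms~\ref{alg:rc-based-blockchain-client} and~\ref{alg:rc-based-blockchain-replica} as the witnessing reduction and argue that (i) they make only black-box use of RC through the primitives $\RC.\inputBlock$, the notification $\RC.\outputBlock$ and the local chain $\RC.\seqPr{b}{r}$, (ii) all inter-process communication other than that internal to RC is asynchronous (the client broadcasts $\langle\REQ,\tx\rangle$ to replicas without waiting, and a replica reacts to each $\REQ$ message independently), and (iii) the six BC-RC properties hold. This matches Def.~\ref{def:reduces-to}.

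First I would observe that \emph{Agreement}, \emph{Termination}, \emph{Chain validity} and \emph{Chain finality} are inherited directly from the underlying RC module: the local chain $\seqPr{b}{r}$ maintained by a BC-RC replica is identified with $\RC.\seqPr{b}{r}$, so each of these four properties is literally the same statement about the same sequence of outputs. No extra argument is needed beyond noting that the algorithm never modifies or reorders the RC output.

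Next I would discharge the two properties specific to BC-RC. For \emph{Valid request}, the client code at Lin.~\ref{lin:bcrc-request} retrieves the current chain via $\readChain()$ at Lin.~\ref{lin:bcrc-retrieve-current} and dispatches $\langle\REQ,\tx\rangle$ only when the guard $\validBlocks(\seqUpTo{b}{i}\append[\tx])$ succeeds at Lin.~\ref{lin:bcrc-send}; this is exactly the required predicate at issuing time. For \emph{Valid input}, a simple invariant on the replica's $\pool$ suffices: at Lin.~\ref{lin:bcrc-init} the pool starts empty, and the only place where transactions are added to it is Lin.~\ref{lin:bcrc-enlarge-pool}, which fires upon receipt of some $\langle\REQ,\tx\rangle$ from a client. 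Hence every $\tx$ that ever appears in $\pool$ was received from some client. Since the transactions aggregated into the input block at Lin.~\ref{lin:bcrc-select}--\ref{lin:bcrc-input} are drawn from $\pool$, every transaction contained in a block input to RC by a correct replica originated from some client request, as required.

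The main obstacle is to be careful about what ``black-box use of RC'' excludes rather than what it entails: the reduction must not exploit the internals of the consensus protocol, nor add any synchronous side-channel between replicas. I would therefore explicitly remark that the only replica-to-replica traffic is the one generated inside the RC module, while all messages triggered by the BC-RC code itself are client-to-replica broadcasts that the replica handles in an \textbf{upon}-style, non-blocking manner. This, together with the assumption of reliable asynchronous channels from \S\ref{sec:system-model}, ensures that the hypotheses of Def.~\ref{def:reduces-to} are met, completing the reduction.
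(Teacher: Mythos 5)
Your proposal is correct and follows essentially the same route as the paper's proof: exhibit Algorithms~\ref{alg:rc-based-blockchain-client}--\ref{alg:rc-based-blockchain-replica} as the reduction, note that no step relies on partially synchronous communication, inherit \emph{Agreement}, \emph{Termination}, \emph{Chain validity} and \emph{Chain finality} from the RC parameter, and check \emph{Valid request} and \emph{Valid input} directly from the code. The only (inessential) difference is that you justify \emph{Valid input} via an invariant on the pool, whereas the paper appeals to the unforgeability of client signatures; both observations discharge the property.
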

\begin{proof}
  We first notice that none of the steps of the generic BC-RC of
  Alg.\ref{alg:rc-based-blockchain-client}--\ref{alg:rc-based-blockchain-replica} rely on partially
  synchronous communication. It is easy to prove that this generic BC-RC meets \emph{Valid
  requests} by Lin.\ref{lin:bcrc-send} of Alg.\ref{alg:rc-based-blockchain-client}, and that it
  meets \emph{Valid input} since all the requests are signed by the clients and signatures are not
  forgeable. \emph{Agreement}, \emph{Termination}, \emph{Chain validity}, and \emph{Chain finality}
  hold respectively by the properties with the same name of the correct implementation of RC that
  is taken as parameter.
\end{proof}


\begin{figure*}
  \begin{center}
    \begin{tikzpicture}{\small
        \node[abstraction] at (0,.8) {
          \parbox{3.5cm}{
            DL-SMR\\
            ($\textbf{in:}~\client.\submit(\tx) \mid \textbf{out:}~\seq{\ell}$)\\
            {Safety, User fairness,\\Log validity, Log finality.}
          }
        } ;
        \node at (4.6,0) {{\large $\supset$}} ;
        \node[abstraction] at (5.4,.95) {
          \parbox{3.5cm}{
            BC-RC\\
            ($\textbf{in:}~\client.\request(\tx) \mid \textbf{out:}~\seq{b}$)\\
            {Valid request, Valid input,\\Agreement, Termination,\\Chain validity, Chain finality.}
          }
        } ;
        \node at (10,0) {{\large $\subseteq$}} ;
        \node[abstraction] at (10.8,.8) {
          \parbox{3.5cm}{
            RC\\
            ($\textbf{in:}~\replica.\inputBlock(b) \mid \textbf{out:}~\seq{b}$)\\
            {Agreement, Termination,\\Chain validity, Chain finality.}
          }
        } ; }
    \end{tikzpicture}
  \end{center}
  \caption{Relation among DL-SMR, BC-RC and RC.}
  \label{fig:relation-dlsmr-bcrc-rc}
\end{figure*}
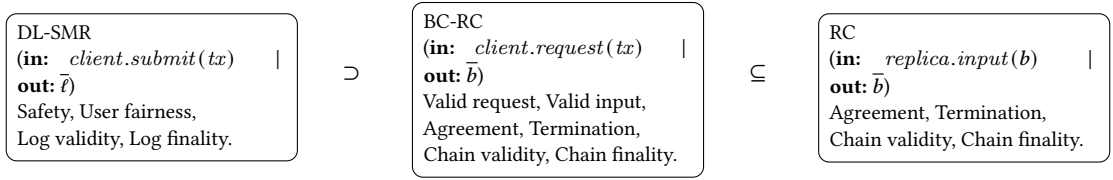

We next study in depth the relation between DL-SMR and BC-RC.



\subsection{The relation between DL-SMR and BC-RC}
\label{sec:relation-rc-smr}

Although both the BC-RC and the DL-SMR constructions implement a distributed ledger, the
two constructions differ in the key property of user fairness. %
BC-RC focuses on the blocks being output at each consensus instance (the data structure),
and places no relation between each committed block and how the block was produced or by
whom (which clients issue which transactions, and how these are assembled into blocks by
replicas). By contrast, DL-SMR places such a relation and assumes that every transaction
issued by a correct client will either become invalid at some moment, or otherwise it will
be committed by all correct replicas. This is a stronger property cannot be achieved by
RC-BC alone.


Lem.\ref{lem:bcrc-reducible-rc}--\ref{lem:smr-not-reducible-bcrc} below state that DL-SMR
is strictly stronger the RC-BC, since RC-BC reduces to DL-SMR but not the converse.


\begin{lemma}
  \label{lem:bcrc-reducible-smr}
  BC-RC is reducible to DL-SMR.
\end{lemma}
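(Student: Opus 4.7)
The plan is to show the reduction by implementing BC-RC on top of DL-SMR via the log-to-chain correspondence described at the end of \S\ref{sec:dlsmr}, whereby a log $\seqUpTo{\ell}{k}$ corresponds to the chain $[\langle\bot,[]\rangle]\append[\seqAt{\ell}{0}]\append\cdots\append[\seqAt{\ell}{k-1}]$ in which every block after the genesis contains exactly one transaction. Concretely, I would implement the client's $\request(\tx)$ by first retrieving the current chain via $\readChain()$ (which itself reads the current log from DL-SMR), checking that $\validBlocks(\seqUpTo{b}{i}\append[\tx])$ holds, and only then invoking $\SMR.\submit(\tx)$. At a correct replica, the local chain $\seqPr{b}{r}$ is maintained passively: whenever DL-SMR commits a transaction $\tx$ at position $k$ of $\seqPr{\ell}{r}$, the replica appends the singleton block $\langle{\uparrow}\seqPrAt{b}{r}{k},[\tx]\rangle$ to its chain. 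Since this algorithm only uses the $\submit$ interface of DL-SMR and the passive observation of the committed log, it introduces no reliance on partial synchrony beyond what DL-SMR itself hides, matching the rationale of Definition~\ref{def:reduces-to}.

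Next I would discharge each of the six BC-RC properties in turn. \emph{Valid request} holds by the explicit $\validBlocks$ check before calling $\submit$. \emph{Valid input} is essentially vacuous: equating each committed log position with a consensus instance, the only transaction ``input'' at each instance was previously $\submit$ted by some client, since signatures are unforgeable and replicas cannot fabricate transactions on behalf of clients. \emph{Agreement} follows directly from DL-SMR \emph{Safety}, because two correct replicas commit identical transactions at identical positions and the derived chains are therefore prefix-compatible. \emph{Chain finality} translates immediately from \emph{Log finality}, since each position yields exactly one block. \emph{Chain validity} is obtained by instantiating the DL-SMR predicate $\validTxs$ so that $\validTxs(\seqUpTo{\ell}{k})$ holds precisely when $\validBlocks$ holds of the corresponding chain; then \emph{Log validity} of DL-SMR carries over directly into \emph{Chain validity} of BC-RC.

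The step I expect to be the main obstacle is \emph{Termination}, because DL-SMR's \emph{User fairness} only guarantees finalisation of individually valid transactions rather than unbounded growth of the log per se. I would address this by combining \emph{User fairness} with the standing system-model assumption from \S\ref{sec:blockchain} that correct clients collectively issue an infinite number of transaction requests: provided these requests are not all rendered permanently invalid, infinitely many must be finalised and the derived chain grows without bound. A secondary subtlety is keeping the two validity predicates coherent under the log-to-chain correspondence, so that the client-side $\validBlocks$ check, the validity condition enforced by DL-SMR, and the $\validBlocks$ invariant on the derived chain all remain consistent; once the mapping between $\validTxs$ and $\validBlocks$ is pinned down as above, the remaining verifications are routine.
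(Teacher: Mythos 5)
Your construction is essentially the paper's: forward $\request(\tx)$ to $\submit(\tx)$, identify $\validBlocks$ with $\validTxs$ under the log-to-chain correspondence of \S\ref{sec:dlsmr}, inherit \emph{Agreement}, \emph{Chain validity} and \emph{Chain finality} from \emph{Safety}, \emph{Log validity} and \emph{Log finality}, and derive \emph{Termination} from \emph{User fairness} plus the assumption that correct clients collectively issue infinitely many transactions. Two cosmetic differences, both unobjectionable: you discharge \emph{Valid request} and \emph{Valid input} explicitly, whereas the paper dismisses them as internal properties of BC-RC not observable from the output; and your pairing of \emph{Chain validity} with \emph{Log validity} and of \emph{Chain finality} with \emph{Log finality} is the natural one (the paper's ``respectively'' list reads as if these two were swapped).

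The one genuine loose end is in \emph{Termination}: your argument is conditional on the proviso ``provided these requests are not all rendered permanently invalid,'' and you never discharge the complementary case. That case is not vacuous --- it is exactly the pathological scenario the paper takes care to close: for every transaction issued by a correct client, a Byzantine client with good bandwidth issues another transaction that invalidates it, so no correct-client transaction is ever finalised, and your argument then yields no growth of the derived chain at all. The case is closed by observing that validity is a deterministic predicate of a replica's current log, so a pending transaction that was valid to replica $r$ can only \emph{become} invalid to $r$ because $r$'s log changed, i.e., because other transactions (here, the invalidating ones) were committed in the meantime. Hence infinitely many invalidations force the log --- and with it the derived chain --- to grow without bound, and \emph{Termination} holds in this branch as well. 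With that observation added, your proof is complete and coincides with the paper's.
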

\begin{proof}
 Assume a correct implementation of DL-SMR. We will show how to produce a correct implementation of BC-RC that uses the implementation of DL-SMR as a black box and that does not rely on partially synchronous communication. We fix a genesis block $\seqAt{b}{0}$ and let every correct replica output it. Upon the invocation of $\request(\tx)$ by a correct client, we let the client
  trigger the primitive $\submit(\tx)$ of DL-SMR. We let the predicate $\validBlocks$
  coincide with $\validTxs$ when the log produced by SMR is viewed as a chain as in the
  correspondence detailed in~\S\ref{sec:dlsmr}. It suffices to show that
  \emph{Agreement}, \emph{Termination}, \emph{Chain validity} and \emph{Chain finality}
  hold, since \emph{Valid request} and \emph{Valid input} are internal properties of
  BC-RC, not observable from the output of the abstraction. \emph{Agreement}, \emph{Chain
    finality} and \emph{Chain validity} hold respectively by \emph{Safety}, \emph{Log
    validity} and \emph{Log finality}. It is easy to see that \emph{Termination} follows
  by the relaible channels and by \emph{User fairness} since, collectively, the correct clients issue an infinite
  number of transactions, and a correct client issues $\request(\tx)$ only if the
  transaction $\tx$ is valid with respect to the current chain, which upholds the
  $\validBlocks$ predicate. It could happen that for each transaction issued by a correct
  client, there exists a Byzantine client with good bandwidth that issues another
  transaction that invalidates the one from the correct client, preventing the latter from
  ever being committed. Although this very improbable scenario is pathological,
  \emph{Termination} would still hold because the output of correct replicas will be
  infinite since we assume that correct clients collectively issue an infinite number of
  transactions.
\end{proof}

\begin{lemma}
  \label{lem:smr-not-reducible-bcrc}
  DL-SMR is not reducible to BC-RC.
\end{lemma}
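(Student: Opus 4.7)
The plan is to prove Lem.\ref{lem:smr-not-reducible-bcrc} by contradiction: assume there exists an algorithm $A$ that implements DL-SMR by making black-box use of BC-RC and relying only on asynchronous communication (per Def.\ref{def:reduces-to}), and exhibit an execution in which $A$ must violate \emph{User fairness}. The core intuition is that the specification of BC-RC places no constraint on \emph{which} pending transactions get included in blocks: \emph{Valid input} only requires that each transaction contained in a block originated from some client's request, and none of the other BC-RC properties force the eventual inclusion of any particular transaction issued by a correct client.

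First I would fix a very permissive application-level predicate---for example, one in which $\validBlocks$ always returns true---so that no transaction ever becomes invalid. Then I would construct an adversarial, yet BC-RC-correct, implementation in which: (i) Byzantine clients continuously flood every correct replica with their own requests, and (ii) at Lin.\ref{lin:bcrc-select} of Alg.\ref{alg:rc-based-blockchain-replica} the selection step always fills the next proposed block to capacity from these Byzantine-issued requests, never picking the transaction $\tx$ issued by a specific correct client $c$ who invokes $\request(\tx)$. I would then verify that this scheduler satisfies all six BC-RC properties: \emph{Agreement}, \emph{Termination}, \emph{Chain validity}, and \emph{Chain finality} hold because the underlying RC module is invoked correctly and correct replicas keep advancing through consensus instances; \emph{Valid request} holds by the guard at Lin.\ref{lin:bcrc-send} of Alg.\ref{alg:rc-based-blockchain-client}; and \emph{Valid input} holds because every transaction placed into every block originates from the request of some (Byzantine) client.

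Next I would argue that under this adversarial execution, the transaction $\tx$ remains valid to every correct replica forever but never appears in the chain $\seq{b}$ produced by BC-RC. Because $A$ uses BC-RC as a black box and must satisfy \emph{Safety}, \emph{Log validity}, and \emph{Log finality}, the log $\seq{\ell}$ that $A$ produces at correct replicas must be derived consistently from the only common source of committed data available under asynchrony, namely the chain output by BC-RC. Hence $\tx$ cannot be committed to $\seq{\ell}$, violating \emph{User fairness} of DL-SMR and contradicting the assumption that $A$ is a correct reduction.

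The main obstacle is rigorously ruling out the possibility that $A$ ``injects'' $\tx$ into the log without $\tx$ appearing in the chain produced by BC-RC. I would address this with a standard indistinguishability argument: I construct a twin execution, identical to the one above from the viewpoint of every correct replica except that $c$ never invokes $\request(\tx)$. Under the black-box assumption and asynchronous channels, the two executions are indistinguishable to the replicas of BC-RC, so BC-RC produces the same chain in both. If $A$ were to commit $\tx$ in some log position in the first execution without $\tx$ appearing in the chain, then by indistinguishability two correct replicas in a third execution could commit different transactions at the same log position, violating \emph{Safety}. This rules out any side-channel injection strategy and completes the contradiction.
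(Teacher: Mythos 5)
Your first half is sound and matches the paper's key insight: the BC-RC specification tolerates censorship, and your concrete adversarial implementation (Byzantine clients flooding the pool, the deliberately under-specified selection at Lin.\ref{lin:bcrc-select} of Alg.\ref{alg:rc-based-blockchain-replica} never picking $\tx$, with $\validBlocks$ fixed to constant true so that the ``becomes invalid'' escape clause of \emph{User fairness} can never fire) is a legitimate correct implementation of BC-RC. The genuine gap is in the second half. Your claim that the log produced by $A$ ``must be derived consistently from the only common source of committed data available under asynchrony, namely the chain output by BC-RC'' is precisely the statement that needs proof; asserting it begs the question, because $A$ is free to run its own protocol among clients and replicas over the asynchronous reliable channels, entirely outside the black box. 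The paper closes exactly this hole by invoking the FLP impossibility \cite{flp}: since the black-box consensus is useless for censored transactions, $A$ would have to implement the required agreement (on where and when $\tx$ enters the log) on its own, which is impossible with asynchronous communication alone---and this is what gives the asynchrony requirement of Def.\ref{def:reduces-to} its teeth. Your proposal never invokes FLP or any equivalent impossibility result.

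Your indistinguishability patch does not substitute for it. First, the two executions are indistinguishable only to the BC-RC black box, not to the components of $A$: in the execution where the correct client invokes $\submit(\tx)$, $A$'s client code can broadcast $\tx$ to all of $A$'s replicas over the asynchronous channels, so $A$'s replicas can and will behave differently in the two executions; nothing structurally forces $A$'s log to be a function of the chain. Second, the ``third execution'' in which two correct replicas commit different transactions at the same log position is never constructed, and it cannot be constructed by a finite hybrid argument: to have replica $r_2$ commit some $\tx'\neq\tx$ at position $k$ while $r_1$ commits $\tx$ there, you must keep $r_2$ ignorant of $\tx$ while the rest of the system completes the commit of $\tx$; but channels are reliable, so $r_2$ eventually learns of $\tx$, and a well-designed $A$ can make any replica's commit wait for corroboration from other replicas---which, in your hybrid, already know about $\tx$---so the hybrid's behaviour at $r_2$ cannot be pinned to the $\tx$-free execution. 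Ruling out \emph{every} such $A$ is exactly the impossibility of asynchronous consensus, which requires the FLP valency argument over infinite executions, not a two-execution indistinguishability argument. The proof as written is therefore incomplete; replacing your last paragraph by the reduction ``if $A$ could commit censored transactions, $A$'s replicas would be solving consensus among themselves using asynchronous communication only, contradicting FLP''---the paper's route---repairs it.
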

\begin{proof}
%

  We show that there exists not an algorithm that implements DL-SMR by using a correct
  implementation of BC-RC as a black box and without relying on partially synchronous
  communication. Such an algorithm could not rely on the consensus provided by the implementation
  of BC-RC because then \emph{User fairness} would not hold in general, since the implementation
  of BC-RC could always neglect the transactions issued by a particular correct client, which would
  never be finalised. Therefore, the algorithm would need to implement consensus on its own,
  but then it would need to rely on partially synchronous communication, since asynchronous consensus is impossible due to the FLP impossibility \cite{FLP85}.
\end{proof}

Figure~\ref{fig:relation-dlsmr-bcrc-rc} summarises the content of
Lem.\ref{lem:bcrc-reducible-smr}--\ref{lem:smr-not-reducible-bcrc}.

\section{Achieving user fairness}
\label{sec:achieving-fairness}

Lem.\ref{lem:smr-not-reducible-bcrc} from \S\ref{sec:preliminaries} sustains that BC-RC
does not solve DL-SMR in general, since BC-RC does not uphold the \emph{User fairness}
property. To cover this gap, we introduce the \emph{Fair Repeated Consensus} abstraction
(FRC for short), which refines RC, and the construction for building a blockchain over FRC
(BC-FRC for short).

\subsection{Fair repeated consensus (FRC)}

In order to formalize \emph{fair repeated consensus} (FRC), we need to introduce the
following preliminary definitions.  We consider an external $\validBlocks$ predicate as in
RC of \S\ref{sec:constructions} and, additionally, we let the external
$\fusion(\seqUpTo{b}{i},(\seq{t_j})_{j\in I})$ be a deterministic, application-defined
function that takes the current chain $\seqUpTo{b}{i}$ and a collection of sequences of
transactions $(\seq{t_j})_{j\in I}$, and fuses the collection into a result
sequence~$\seq{t}$ that contains valid transactions.

The FRC abstraction that we introduce below refines the RC abstraction from
\S\ref{sec:rc}. FRC shares interface and properties with RC, and additionally meets the
following:
\begin{description}


\item[{\rm (\emph{Fair fusion})}] Let $\seqUpTo{b}{i}$ be a chain, $(\seq{t_j})_{j\in I}$
   a collection of sequences of transactions, and $\seq{t}$ be the result sequence returned
   by $\fusion(\seqUpTo{b}{i},(\seq{t_j})_{j\in I})$. For every transaction $\tx$ that
   occurs in some sequence $\seq{t_j}$ in the collection, either $\tx$ occurs in the result
   $\seq{t}$, or otherwise there exists a position $k$ of $\seq{t}$ such that
   $\validBlocks(\seqUpTo{b}{i}\append(\seqUpTo{t}{k}\concat[\tx]))$ does not hold.
\item[{\rm (\emph{Fusion validity})}] If a correct replica~$r$ outputs a block at some
  consensus instance $i$, then the sequence of transactions contained in that block
  coincides with the fusion of the sequences in the input blocks of $f+1$ or more
  replicas.
\end{description}


\emph{Fair fusion} collects the intuition that the fusion function should include in its
result the transactions in the collection of sequences that uphold the external validity
predicate.

\emph{Fusion validity} adapts \emph{Vector validity} from \cite{DS97} to our setting and
relaxes the threshold of contributors $2f+1$ in \cite{DS97} to $f+1$ (\ie, each output
block must contain transactions coming from at least $f+1$ contributors). \AG{Polish and
  show that FRC refines RC and comment on the different validity properties, and connect
  to internal validity.} Notice that \emph{Fusion validity} abstracts from the way the
inputs from replicas are aggregated into a result by using the fusion function, while
\emph{Vector validity} imposes that the inputs of the contributors are atomic, and that
the result contains at most one value for each replica in the system. %

The fusion function and the validity predicate of FRC could be instantiated to cover a
varied spectrum of order-fairness models. At one end of the spectrum we could place the rather
general model of Bitcoin \cite{Nakamoto2008}, which considers no transaction ordering
within a block (akin to a set of transactions where every transaction commutes with each
other) and where causality of a transaction is only with respect to the previous blocks
(\ie, the predicates $\forall\tx\in\txs.~\validBlocks(\seqUpTo{b}{i}\append[\tx])$ and
$\validBlocks(\seqUpTo{b}{i}\append\txs)$ are synonymous). At the other end of the
spectrum we could place the rather specific model of Aequitas \cite{KMZGSJA20}, which
considers a transaction ordering that contains no loops except in some fragments of the
sequence where every transaction is output ``before or at the same time as'' each other,
and which best approximates \emph{receive-order} while avoiding the Condorcet paradox.

  \begin{algorithm}[t]
    \setcounter{AlgoLine}{0}
    \small
    $\done\assign 0$\tcp*{Initialisation}
    \SubAlgo{\Procedure $\inputBlock(i,b)$\label{lin:frc-input}}
    {
      $\txs\assign$ transactions in $b$\;
      send $\langle\SUBPROP,r,i,\txs\rangle$ to every replica;
    }
    \SubAlgo{\Upon received $\langle\SUBPROP,s,i,\txs_s\rangle$ from each
      $s\in S$ with $|S|\geq f+1$ and $\done = i - 1$\
      \label{lin:frc-receive-sub-proposals}}
    {
      $\txs\assign\fusion(\RC.\seqPrUpTo{b}{r}{i-1},(\txs_s)_{s\in S})$\;
      \label{lin:frc-aggregate}
      $\done\assign i$\;\label{lin:frc-done}
      $\RC.\inputBlock(i,\langle{\uparrow}
      (\RC.\seqPrAt{b}{r}{i-1}),\txs\rangle)$\;\label{lin:frc-input-rc}
    }
    \SubAlgo{\Upon $\RC.\outputBlock(i)$\label{lin:frc-output-rc}}
    {
      trigger $\outputBlock(i)$\;\label{lin:frc-output}
    }
    \caption{FRC Replica $r$}
    \label{alg:frc-replica}
  \end{algorithm}

FRC can be achieved with a correct implementation of RC. One direct way to reduce FRC to RC is to
add an extra communication round that performs the aggregation of the sub-proposals that are input
at FRC into an aggregated proposal that is input at the RC component. Alg.\ref{alg:frc-replica}
above witnesses such a reduction by implementing FRC on top of a correct implementation of RC. When
a replica invokes the primitive $\inputBlock(i,b)$ at FRC (Lin.\ref{lin:frc-input} of
Alg.\ref{alg:frc-replica}), it sends a sub-proposal for block $b$ to every other replica. After
receiving $f+1$ or more sub-proposals at consensus instance $i$
(Lin.\ref{lin:frc-receive-sub-proposals}) a replica aggregates the received sub-proposals into a
single proposal (Lin.\ref{lin:frc-aggregate}), marks that the aggregation at consensus instance $i$
has been done (Lin.\ref{lin:frc-done}), and inputs a block with the aggregated proposal at the RC
component (Lin.\ref{lin:frc-input-rc}). We assume that $\fusion$ in line \ref{lin:frc-input-rc}) is
such that \emph{Fair fusion} holds. We require the $\validBlocks$ predicate of the RC component to
be extended with a check that the input block has been contributed by $f+1$ or more replicas. This
check can be achieved by collecting the signed sub-proposals that have been aggregated together
with the aggregated proposal in the input block.\footnote{Checking that the input block has been
contributed by enough replicas can be optimised by letting the replicas sign each individual
transaction within a sub-proposal, or by tayloring some multi-signature scheme.} Once the RC
component triggers the notification that the block at consensus instance $i$ has been output, the
replica forwards this notification at FRC (Lin.\ref{lin:frc-output-rc}--\ref{lin:frc-output}).

\begin{lemma}
  \label{lem:frc-reducible-rc}
  FRC is reducible to RC.
\end{lemma}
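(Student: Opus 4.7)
The plan is to verify the six FRC properties for Alg.\ref{alg:frc-replica} under the assumption that the underlying RC is a correct implementation. First I would check that the algorithm uses only asynchronous primitives: the only inter-replica communication is the broadcast of $\SUBPROP$ messages over reliable channels, and everything else is a black-box invocation of $\RC$, which meets the requirement of Def.\ref{def:reduces-to}. I would then split the six properties into two groups: those essentially inherited from the underlying RC, and the two reduction-specific ones (\emph{Fair fusion} and \emph{Fusion validity}).

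For the inherited group, my argument is that \emph{Agreement}, \emph{Chain finality} and \emph{Chain validity} follow because the FRC output at instance $i$ is simply the forwarding of $\RC.\outputBlock(i)$, so the chain stored at a correct replica in FRC coincides with the one stored by the RC layer. The key move is to extend the $\validBlocks$ predicate used inside RC so that it additionally verifies that each input block is accompanied by signed sub-proposals from $f+1$ or more distinct replicas and that the block's transaction sequence equals the result of $\fusion$ applied to those sub-proposals; since signatures are unforgeable and $\fusion$ is deterministic, this check is local and verifiable, and RC's \emph{Chain validity} then lifts it to every correct replica. \emph{Fair fusion} is a hypothesis on the application-defined $\fusion$ function and needs no further argument.

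\emph{Fusion validity} is immediate from this extended $\validBlocks$: every block decided by RC carries signatures of $f+1$ or more sub-proposals whose fusion coincides with its transaction sequence, and this is exactly what FRC forwards at Lin.\ref{lin:frc-output}. The main obstacle I anticipate is \emph{Termination}, where I need to rule out that the guard at Lin.\ref{lin:frc-receive-sub-proposals} blocks a correct replica forever. My plan is to induct on the instance index $i$: assuming that the caller of FRC invokes $\inputBlock(i,\cdot)$ at every correct replica (the same convention adopted for RC in \S\ref{sec:rc}), at least $2f+1$ correct replicas will send $\SUBPROP$ messages for instance $i$ and, by reliability of channels, every correct replica eventually delivers $f+1$ of them, advances $\done$ to $i$, and invokes $\RC.\inputBlock(i,\cdot)$; \emph{Termination} of RC then delivers $\RC.\outputBlock(i)$ and the FRC layer triggers its own $\outputBlock(i)$. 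The base case relies on the initialisation $\done \assign 0$ together with the convention that RC notifies the genesis block at instance $0$; the subtle point in the inductive step is that the $\done = i-1$ guard forces strict sequencing, but this matches the strict sequencing that RC itself imposes on its outputs via \emph{Chain finality}, so no correct replica is starved.
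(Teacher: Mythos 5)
Your proof is correct and keeps the paper's overall decomposition (inherit \emph{Agreement}, \emph{Termination}, \emph{Chain validity} and \emph{Chain finality} from the RC parameter; derive \emph{Fair fusion} and \emph{Fusion validity} from the $f+1$ aggregation step), but your arguments for the non-inherited properties and for \emph{Termination} take a genuinely different, and in fact tighter, route. The paper proves \emph{Fusion validity} by claiming that \emph{Chain validity} guarantees the decided block is signed by some \emph{correct} replica, which then must have aggregated $f+1$ sub-proposals at Lin.\ref{lin:frc-receive-sub-proposals} using $\fusion$ at Lin.\ref{lin:frc-aggregate}; strictly speaking, \emph{Chain validity} only promises a signature by \emph{some} replica, possibly Byzantine, so the paper's step silently leans on the extension of $\validBlocks$ (attaching $f+1$ signed sub-proposals to each input block) that it describes only in the prose before the lemma. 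You instead make that mechanism the centerpiece and strengthen it: your extended $\validBlocks$ also checks that the block's sequence equals the deterministic $\fusion$ of the attached signed sub-proposals, so \emph{Fusion validity} becomes a locally verifiable condition lifted to every correct replica by RC's \emph{Chain validity}, with no assumption about who proposed the block---this closes the gap in the paper's argument. Second, the paper treats \emph{Termination} as trivially inherited, whereas you correctly observe that the guard at Lin.\ref{lin:frc-receive-sub-proposals} gates the call to $\RC.\inputBlock$ and could in principle starve a replica; your induction (every correct replica invokes $\inputBlock$, hence at least $2f+1\geq f+1$ correct sub-proposals eventually reach everyone over reliable channels, hence every correct replica feeds RC at every instance) is exactly what is needed to legitimately invoke RC's \emph{Termination}, whose statement presupposes that correct replicas supply inputs. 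The one imprecision is your closing appeal to \emph{Chain finality} to justify the $\done = i-1$ sequencing: finality only bounds outputs per instance; what your induction actually supplies (and suffices) is that $\done$ eventually reaches $i-1$ while the buffered $\SUBPROP$ messages for instance $i$ persist, so the guard eventually fires.
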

\begin{proof}
  We first notice that none of the steps of Alg.\ref{alg:frc-replica} rely on partially synchronous
  communication. We show that Alg.\ref{alg:frc-replica} implements FRC. \emph{Agreement},
  \emph{Termination}, \emph{Chain validity} and \emph{Chain finality} hold respectively by the
  properties with the same name of the correct implementation of RC taken as parameter. It is
  enough to show that \emph{Fair fusion} and \emph{Fusion validity} hold. At each consensus
  instance $i$, \emph{Chain validity} ensures that the decided block $\seqAt{b}{i}$ is signed by
  some correct replica $r$. By Lin.\ref{lin:frc-receive-sub-proposals} of
  Alg.\ref{alg:frc-replica}, replica $r$ aggregated contributions from $f+1$ or more
  sub-proposals, and \emph{Fusion validity} holds. By Lin.\ref{lin:frc-aggregate}, replica $r$
  used the $\fusion$ function for aggregating the sub-proposals, and \emph{Fair fusion} holds and
  we are done.
\end{proof}

\AG{Comment on the size of the resulting block.}

\subsection{Blockchain over FRC (BC-FRC)}\label{sec:BC-FRC}

 \begin{algorithm}[t]
    \setcounter{AlgoLine}{0}
    \small
    \SubAlgo{\Procedure $\request(\tx)$\label{lin:bcfrc-request}}
    {
      $\seqUpTo{b}{i}\assign{}$
      $\readChain()$\label{lin:bcfrc-retrieve-current}\;
      \lIf{$\validBlocks(\seqUpTo{b}{i}\append[\tx])$}
      {send $\langle\REQ,\tx\rangle$ to every replica\label{lin:bcfrc-send}}
    }
    \caption{BC-FRC Client $c$}
    \label{alg:frc-based-blockchain-client}
  \end{algorithm}

  \begin{algorithm}[t]
    \setcounter{AlgoLine}{0}
    \small
    $\pool\assign []$\label{lin:bcfrc-init}\tcp*{Initialisation}
    \SubAlgo{\Upon received $\langle\REQ,\tx\rangle$ from a client
      \label{lin:bcfrc-receive-transaction}}
    {
      $\pool\assign\pool\concat[\tx]$\label{lin:bcfrc-enlarge-pool}\;
    }
    \SubAlgo{\Upon $\FRC.\outputBlock(i)$
      \label{lin:frc1bc-new-instance}}
    {
      clear transactions in $\FRC.\seqPrUpTo{b}{r}{i}$ from
      $\pool$\label{lin:frcbc-clear}\;
      $\txs\assign$ pick in FIFO order from $\pool$ up to block size
      and valid to $\FRC.\seqPrUpTo{b}{r}{i}$\label{lin:frc1bc-pick-fifo}\;
      $\FRC.\inputBlock(i+1,\langle{\uparrow}
      (\FRC.\seqPrAt{b}{r}{i}),\txs\rangle)$\label{lin:frc1bc-input}\;
    }
    \caption{BC-FRC Replica $r$}
    \label{alg:frc-based-blockchain-replica}
  \end{algorithm}

A \emph{blockchain over fair repeated consensus} (BC-FRC for short) is a blockchain
implemented on top of a correct implementation of FRC, for which we require the five
properties of \emph{Agreement}, \emph{Termination}, \emph{Chain finality}, \emph{Fair
  fusion} and \emph{Fusion validity}. \AG{Refer to the reduction in previous subsection.}
We also require the \emph{Valid request} and \emph{Valid input} properties of BC-RC in
\S\ref{sec:bcrc}, together with the additionally properties below:
\begin{description}
\item[{\rm (\emph{Request agreement})}] If a correct client issues a transaction $\tx$ then
  every correct replica eventually receives the request to insert $\tx$.
\item[{\rm (\emph{Stubborn input})}] If a correct replica~$r$ receives a request to insert
  a pending transaction $\tx$ at consensus instance $i$ then there exists a consensus
  instance $i_0\geq i$ such that
  \begin{enumerate}[label=(\roman*)]
  \item $\tx$ is already finalised at consensus instance $i_0$, or
   \item there exists a position $k$ such that
     \begin{displaymath}
         \validBlocks(\seqPrUpTo{b}{r}{i_0-1}\append(\seqUpTo{\txs}{k}\concat [\tx]))
     \end{displaymath}
     does not hold, where $\txs$ is the sequence of transactions in the block that the
     replica $r$ inputs at consensus instance $i_0$, or
   \item for every $i'\geq i_0$ the replica $r$ inputs a block that contains $\tx$ at
     consensus instance $i'$.
  \end{enumerate}
\end{description}


Alg.\ref{alg:frc-based-blockchain-client}--\ref{alg:frc-based-blockchain-replica} above
introduce the client and the replica of a generic BC-FRC, which takes a correct implementation of
FRC as parameter.

The FRC parameter ensures \emph{Fair fusion} and \emph{Fusion validity}. Besides requiring
an implementation of FRC, the generic BC-FRC only differs form
Alg.\ref{alg:rc-based-blockchain-client}--\ref{alg:rc-based-blockchain-replica} in
Lin.\ref{lin:frc1bc-pick-fifo} of Alg.\ref{alg:frc-based-blockchain-replica}, which
enforces that the replicas pick the transactions from the pool in FIFO order, thus
upholding \emph{Stubborn input}.\footnote{Differently from
  Alg.\ref{alg:rc-based-blockchain-replica} in \S\ref{sec:preliminaries},
  Alg.\ref{alg:frc-based-blockchain-replica} here specifies a straightforward mechanism
  for handling pending transactions: traverse the pool in FIFO order while picking valid ones. We
  are however agnostic to other more speculative mechanisms for handling pending transactions that
  the blockchain model may stipulate, which ought to preserve \emph{Stubborn Input}.} Notice that
  \emph{Request agreement} holds trivially by the assumption of a reliable network. 

Alg.\ref{alg:frc-based-blockchain-client}--\ref{alg:frc-based-blockchain-replica} witness the reduction from BC-FRC to FRC.
\begin{lemma}\label{lem:bcfrc-reducible-frc}
  BC-FRC is reducible to FRC.
\end{lemma}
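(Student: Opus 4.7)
The plan is to proceed analogously to the proof of Lem.\ref{lem:bcrc-reducible-rc}, by inspecting Alg.\ref{alg:frc-based-blockchain-client}--\ref{alg:frc-based-blockchain-replica} and checking that no step relies on partially synchronous communication (all inter-process interactions consist of client-to-replica sends over reliable channels and invocations/notifications of the FRC black box), and then verifying that each required property of BC-FRC holds. I would first dispatch the properties inherited through the FRC parameter: \emph{Agreement}, \emph{Termination}, \emph{Chain validity}, \emph{Chain finality}, \emph{Fair fusion}, and \emph{Fusion validity} all follow immediately from the properties with the same name of the correct implementation of FRC that is passed as the building block (invoked at Lin.\ref{lin:frc1bc-input} of Alg.\ref{alg:frc-based-blockchain-replica} and read through $\FRC.\seqPrUpTo{b}{r}{i}$).

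Next I would handle the interface properties of BC-FRC. \emph{Valid request} holds by the guard on Lin.\ref{lin:bcfrc-send} of Alg.\ref{alg:frc-based-blockchain-client}, which only forwards $\langle\REQ,\tx\rangle$ when $\validBlocks(\seqUpTo{b}{i}\append[\tx])$ holds for the current chain $\seqUpTo{b}{i}$ returned by $\readChain()$ at the moment of issuing. \emph{Valid input} holds because, at Lin.\ref{lin:frc1bc-pick-fifo}, the replica only picks transactions from $\pool$, and $\pool$ is only augmented (Lin.\ref{lin:bcfrc-enlarge-pool}) upon reception of a $\langle\REQ,\tx\rangle$ message from some client; since signatures are not forgeable, each such $\tx$ corresponds to a request the replica actually received. \emph{Request agreement} is immediate: a correct client broadcasts $\langle\REQ,\tx\rangle$ to every replica at Lin.\ref{lin:bcfrc-send} and reliable channels guarantee delivery to every correct replica.

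The main obstacle, and the property deserving the most careful argument, is \emph{Stubborn input}. I would argue it as follows: suppose a correct replica~$r$ has received $\langle\REQ,\tx\rangle$ for a pending transaction $\tx$ by the time of consensus instance $i$. Since Lin.\ref{lin:bcfrc-enlarge-pool} appends $\tx$ to $\pool$ and Lin.\ref{lin:frcbc-clear} only removes finalised transactions, $\tx$ remains in $\pool$ until it is finalised; moreover, because Lin.\ref{lin:frc1bc-pick-fifo} traverses $\pool$ in FIFO order and $\pool$ has bounded growth between consensus instances, there exists a consensus instance $i_0\geq i$ from which on $\tx$ is considered at every round. At each such round $i'\geq i_0$ one of three disjoint cases occurs: either $\tx$ has already been finalised (case (i)); or $\tx$ fails the validity check at some prefix $\seqUpTo{\txs}{k}$ built during the FIFO traversal (case (ii)); or $\tx$ is picked and included in the block input at Lin.\ref{lin:frc1bc-input} (case (iii)). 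Since the FIFO discipline preserves this behaviour at every subsequent instance, the disjunction required by \emph{Stubborn input} holds, completing the argument.
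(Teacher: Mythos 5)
Your proposal is correct and takes essentially the same approach as the paper's proof: it dispatches \emph{Valid request}, \emph{Valid input}, \emph{Request agreement} and the properties inherited from the FRC black box by inspection of Alg.\ref{alg:frc-based-blockchain-client}--\ref{alg:frc-based-blockchain-replica} (as in Lem.\ref{lem:bcrc-reducible-rc}), and then establishes \emph{Stubborn input} via the same pool-persistence-plus-FIFO argument, with your per-round trichotomy lifting to the required disjunction exactly as the paper's global case split (finalised at some instance / invalidated at some instance / otherwise stubbornly input forever) does. If anything, your allowance that $i_0$ may strictly exceed $i$ because of block-size limits on the FIFO pick is slightly more careful than the paper's direct claim that $\tx$ appears in $r$'s input at every instance $i'\geq i$.
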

\begin{proof}
  The proof goes along the same lines than the proof of Lem.\ref{lem:bcrc-reducible-rc} in
  \S\ref{sec:bcrc}, where the properties of \emph{Valid request}, \emph{Valid input}, \emph{Fair
  fusion},\emph{Fusion validity}, \emph{Agreement}, \emph{Termination}, \emph{Chain validity},
  \emph{Chain finality} are straightforward to prove. \emph{Request agreement} holds trivially by
  the assumption of a reliable network, and it suffices to show that
  Alg.\ref{alg:frc-based-blockchain-client}--\ref{alg:frc-based-blockchain-replica} enjoy
  \emph{Stubborn input}. Let $\tx$ be a transaction received by replica $r$ at some consensus
  instance $i$. If $\tx$ is finalised at some consensus instance $i_0\geq i$, then (i) of
  \emph{Stubborn input} holds. If $\tx$ becomes invalid with respect to the sequence $\txs$ (up to
  some position $k$) that replica $r$ inputs at some consensus instance $i_0\geq i$, then (ii)
  of \emph{Stubborn input} holds. Now consider that $\tx$ is never finalised and that it never
  becomes invalid with respect to the input of replica $r$ at any consensus instance posterior to
  $i$. Since a transaction is only cleared when it is included in the decided block of some
  consensus instance (Lin.\ref{lin:frcbc-clear} of \ref{alg:frc-based-blockchain-replica}) then
  $\tx$ will never be cleared from $r$'s pool. By Lin.\ref{lin:frc1bc-pick-fifo}, $\tx$ will be
  included in $r$'s input for every consensus instance $i'\geq i$, which satisfies (iii) of
  \emph{Stubborn input} and we are done.
\end{proof}


Our main contribution is collected in Thm.\ref{thm:bcfrc-fairness} below. 

\begin{theorem}
  \label{thm:bcfrc-fairness}

  BC-FRC enjoys User fairness.
\end{theorem}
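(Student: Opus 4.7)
The plan is to prove the contrapositive: assuming a correct client issues a transaction~$\tx$ that is valid to~$r$ at issuance and never becomes invalid to~$r$ afterwards, I will show that~$\tx$ is eventually finalised.

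First I would apply \emph{Valid request} at the client and \emph{Request agreement} to ensure every correct replica eventually receives the request to insert~$\tx$. Then I would use \emph{Agreement} to observe that the chains traversed by correct replicas are prefixes of each other, so the hypothesis ``$\tx$ never becomes invalid to~$r$'' propagates to the other correct replicas: $\tx$ remains valid at every chain state ever reached by any correct replica.

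Next, I would apply \emph{Stubborn input} at each correct replica~$s$. Case (i) is ruled out by the assumption that~$\tx$ is never finalised. The main obstacle will be ruling out case (ii), in which $\tx$ is excluded from~$s$'s input because it is invalid after some \emph{intermediate} prefix of older transactions picked in FIFO order (Lin.\ref{lin:frc1bc-pick-fifo} of Alg.\ref{alg:frc-based-blockchain-replica}). I would argue this scenario can only hold transiently: since transactions older than~$\tx$ in~$s$'s pool are finitely many and get progressively cleared---either finalised, or becoming invalid themselves, or, by an inductive ``first perpetually-pending transaction'' argument, eventually included---$\tx$ eventually rises to the front of~$s$'s FIFO pool, after which case (iii) takes over: $s$ stubbornly inputs a block containing~$\tx$ at every consensus instance from some~$i_0$ onwards.

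Finally, by \emph{Termination} infinitely many consensus instances complete, and by \emph{Fusion validity} each decided block past~$i_0$ is the fusion of sub-proposals from $f{+}1$ or more replicas, of which at least one is correct and, by the previous step, contributes~$\tx$. By \emph{Fair fusion}, either~$\tx$ is in the decided block---contradicting the assumption that~$\tx$ is never finalised---or there exists a prefix of the fusion result after which appending~$\tx$ is invalid. The delicate part is to rule out this second disjunct: the argument must show that if every fusion excludes~$\tx$ due to an intermediate invalidity, the decided blocks progressively accumulate transactions conflicting with~$\tx$, so that~$\tx$ eventually becomes invalid at~$r$'s current chain, contradicting the working hypothesis and closing the proof.
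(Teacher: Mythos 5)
Your skeleton is exactly the paper's: assume $\tx$ never becomes invalid and is never finalised, use \emph{Request agreement} to get the request to every correct replica, \emph{Stubborn input} to force its case (iii), \emph{Fusion validity} to obtain at least one correct contributor to the decided block whose sub-proposal contains $\tx$, and \emph{Fair fusion} plus finality to reach the contradiction. The divergence is in how the two ``invalidity'' disjuncts---case (ii) of \emph{Stubborn input} and the second disjunct of \emph{Fair fusion}---are discharged. The paper reads them as instances of ``$\tx$ becomes invalid to $r$'', i.e., as directly triggering the escape clause of \emph{User fairness}, so it dismisses them in one sentence and the proof closes immediately. You instead read ``valid to $r$'' strictly as validity with respect to $r$'s committed chain; under that reading these disjuncts (which involve the chain extended with an \emph{uncommitted} prefix $\seqUpTo{\txs}{k}$ or $\seqUpTo{t}{k}$) are genuinely different conditions, and you open two auxiliary sub-proofs to bridge the gap.

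Those two sub-proofs are where your proposal is incomplete, and neither closes as sketched. The FIFO ``first perpetually-pending transaction'' induction fails because the transactions sitting ahead of $\tx$ in another correct replica $s$'s pool (Lin.\ref{lin:frc1bc-pick-fifo} of Alg.\ref{alg:frc-based-blockchain-replica}) may have been issued by Byzantine clients to $s$ alone: \emph{Request agreement} says nothing about them, no other correct replica need ever input them, so \emph{Fusion validity} and \emph{Fair fusion} never force them into a decided block; they can remain pending and picked forever, keeping case (ii) alive at $s$ without $\tx$ ever becoming invalid on any chain. Likewise, your ``accumulation'' argument for the \emph{Fair fusion} disjunct needs a persistence property---if $\tx$ is invalid after the prefix $\seqUpTo{t}{k}$ of a decided block, then $\tx$ is invalid on the chain that results from committing that block---which the paper never assumes of $\validBlocks$. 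So under your strict reading the theorem does not actually follow from the stated properties; it closes only under the paper's broader reading of ``becomes invalid'', and under that reading all of your extra machinery (including the \emph{Agreement}-based transfer of validity between replicas and the appeal to \emph{Termination}, which the paper leaves implicit) is unnecessary. The honest conclusion is that your proposal exposes a real looseness in the paper's definitions rather than supplying a proof that repairs it.
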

\begin{proof}
    Consider a correct implementation of BC-FRC. Assume a correct client issues a transaction $\tx$
    that is valid with respect to some correct replica $r$. If $\tx$ ever becomes invalid with
    respect to $r$ afterwards, then we are done. Otherwise, we consider the case where $\tx$ never
    becomes invalid with respect to $r$, and show that $\tx$ will be eventually finalised.
    
    Now, assume towards a contradiction that $\tx$ is never finalised. By \emph{Request agreement}
    each correct replica $r$ eventually receives the request to insert $\tx$ at some consensus
    instance $i_r$. Let $i$ be the biggest among the $i_r$’s. By \emph{Stubborn input}, there
    exists a consensus instance $i_0\geq i$ such that for each correct replica $r$ one of the three
    conditions of the \emph{Stubborn input} property of \S\ref{sec:BC-FRC} is true. The (i) and
    (ii) cannot be true for any correct replica, because it is straightforward to derive a
    contradiction to our assumptions from them. Therefore, \emph{Stubborn input} guarantees that
    (iii) will be true for every correct replica, this is, each correct replica $r$ inputs a block
    that contains $\tx$ at every consensus instance $i'\geq i_0$. By \emph{Fusion validity}, the
    decided block $\seqAt{b}{i_0}$ at consensus instance $i_0$ is contributed by $f+1$ or more
    replicas, and therefore there is at least one correct replica who contributed to
    $\seqAt{b}{i_0}$ whose sub-proposal contains $\tx$. By \emph{Fair fusion}, transaction $\tx$ is
    in the sequence of the decided block $\seqAt{b}{i_0}$ or otherwise $\tx$ would have become
    invalid, which we assumed it never happens. This means that $\tx$ is output at consensus
    instance $i_0$, and by \emph{Chain finality}, the transaction $\tx$ will never be revoked,
    hence it is finalised, which results in a contradiction and we are done. %
\end{proof}

In the next subsection we show that DL-SMR and BC-FRC are equivalent, since they reduce to each other.

\subsection{Equivalence of DL-SMR and BC-FRC}


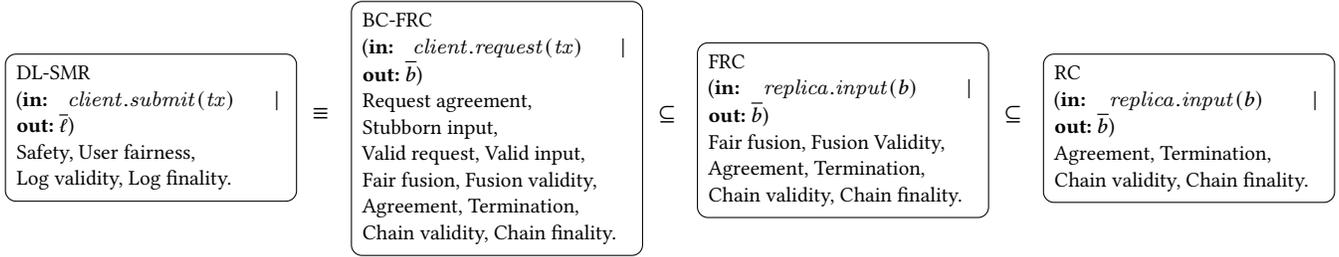
\begin{figure*}
  \begin{center}
      \begin{tikzpicture}{\small
          \node[abstraction] at (0,0.8) {
            \parbox{3.5cm}{
              DL-SMR\\
              ($\textbf{in:}~\client.\submit(\tx) \mid \textbf{out:}~\seq{\ell}$)\\
              {Safety, User fairness,\\Log validity, Log finality.}
            }
          } ;
          \node at (4.2,0) {{\large $\equiv$}} ;
          \node[abstraction] at (4.6,1.5) {
            \parbox{3.5cm}{
              BC-FRC\\
              ($\textbf{in:}~\client.\request(\tx) \mid \textbf{out:}~\seq{b}$)\\
              {Request agreement,\\Stubborn input,\\Valid request, Valid input,\\
               Fair fusion, Fusion validity,\\
               Agreement, Termination,\\Chain validity, Chain finality.}
            }
          } ;
          \node at (8.8,0) {{\large $\subseteq$}} ;
          \node[abstraction] at (9.2,.95) {
            \parbox{3.5cm}{
              FRC\\
              ($\textbf{in:}~\replica.\inputBlock(b) \mid \textbf{out:}~\seq{b}$)\\
              {Fair fusion, Fusion Validity,\\Agreement, Termination,\\Chain validity, Chain  finality.}
           }
          } ;
          \node at (13.4,0) {{\large $\subseteq$}} ;
          \node[abstraction] at (13.8,0.8) {
            \parbox{3.5cm}{
              RC\\
              ($\textbf{in:}~\replica.\inputBlock(b) \mid \textbf{out:}~\seq{b}$)\\
              {Agreement, Termination,\\Chain validity, Chain finality.}
            }
          } ; }
      \end{tikzpicture}
  \end{center}
  \caption{Relation among DL-SMR, BC-FRC, FRC and RC.}
  \label{fig:relation-dlsmr-bcfrc-frc-rc}
\end{figure*}

The fairness guarantees provided by BC-FRC warrant the reduction from DL-SMR to BC-FRC.

\begin{lemma}
  \label{lem:dlsmr-reducible-bcfrc}

  DL-SMR is reducible to BC-FRC.
\end{lemma}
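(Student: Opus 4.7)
The plan is to exhibit an algorithm that implements DL-SMR by making black-box use of a correct implementation of BC-FRC, and relying only on the reliable asynchronous channels of the system model. I would let the client-side primitive $\submit(\tx)$ of DL-SMR simply invoke $\request(\tx)$ on the underlying BC-FRC, and let each replica maintain its log $\seqPr{\ell}{r}$ by flattening the output chain of BC-FRC: whenever BC-FRC outputs a block $\seqPrAt{b}{r}{i}$ containing a sequence of transactions $[\tx_0,\ldots,\tx_m]$, the replica appends these transactions in order to its local log, committing $\tx_j$ to the position $k$ that is next available. Since the correspondence between logs and chains described in \S\ref{sec:dlsmr} lets the log $\seqUpTo{\ell}{k}$ correspond to a chain whose blocks carry one transaction each, I would choose the $\validTxs$ predicate of DL-SMR to coincide with the $\validBlocks$ predicate of BC-FRC under this flattening, so that a prefix $\seqUpTo{\ell}{k}$ is valid iff the chain built by packaging its transactions one-per-block (extending the current chain of BC-FRC) is valid.

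With that construction in place, I would verify each DL-SMR property in turn. \emph{Safety} follows from \emph{Agreement} of BC-FRC together with the deterministic way we flatten decided blocks into log positions: two correct replicas see compatible (prefix-related) chains and therefore produce logs that agree position-by-position. \emph{Log validity} follows from \emph{Chain validity} of BC-FRC and the chosen correspondence between $\validTxs$ and $\validBlocks$. \emph{Log finality} follows from \emph{Chain finality} of BC-FRC, because each consensus instance yields exactly one block and we append its transactions deterministically and only once. \emph{User fairness} is the crucial property and follows directly from Thm.\ref{thm:bcfrc-fairness}: a transaction $\tx$ issued by a correct client is forwarded via $\request(\tx)$, and unless it becomes invalid it is eventually included in some decided block, which is then flattened into the log and thus finalised in the DL-SMR sense.

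None of these steps appeals to partial synchrony; they rely only on the reliable authenticated channels and on BC-FRC as a black box, which meets the requirement of Def.\ref{def:reduces-to}. The main obstacle I foresee is purely bureaucratic: lining up the two notions of validity across the two abstractions (a per-transaction predicate on logs versus a per-block predicate on chains) so that ``$\tx$ remains valid to $r$ at log position $k$'' in the DL-SMR sense is faithfully mirrored by ``the block that would carry $\tx$ at the corresponding chain position is valid'' in the BC-FRC sense. Once this translation is fixed, every required DL-SMR guarantee is inherited from a BC-FRC property we have already established, and in particular \emph{User fairness} is inherited from Thm.\ref{thm:bcfrc-fairness}, completing the reduction.
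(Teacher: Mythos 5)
Your proposal is correct and follows essentially the same route as the paper's proof: map $\submit(\tx)$ to $\request(\tx)$, flatten the BC-FRC chain into a log so that $\validTxs$ coincides with $\validBlocks$, and then derive \emph{Safety}, \emph{Log validity}, \emph{Log finality} and \emph{User fairness} from \emph{Agreement}, \emph{Chain validity}, \emph{Chain finality} and Thm.\ref{thm:bcfrc-fairness} respectively. The only cosmetic difference is that you phrase the validity correspondence via the one-transaction-per-block packaging of \S\ref{sec:dlsmr}, whereas the paper states it directly by concatenating block contents into a log; both resolve the same bookkeeping issue.
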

\begin{proof}
  Assume a correct implementation of BC-FRC. We produce a correct implementation of DL-SMR that
  uses the implementation of BC-FRC as a black box and that does not rely on partially synchronous
  communication as follows. Upon the invocation of $\submit(\tx)$ by a correct client, we let the
  client trigger the primitive $\request(\tx)$ of BC-FRC. We let the predicate $\validTxs$
  coincide with $\validBlocks$ when the chain produced by BC-FRC is viewed as a log by
  concatenating the sequences of transactions contained in the blocks in the order they
  occur in the chain.

  \emph{Safety} holds by \emph{Agreement} when considering that a replica commits a transaction
  $\tx$ when it outputs a block whose sequence of transactions contains $\tx$. (Without loss of
  generality, we let the replica commit at once all the consecutive log positions that correspond
  to appending to the current log the sequence of transactions in the output block.) \emph{User
  fairness} holds by Thm.\ref{thm:bcfrc-fairness} and by our choice of $\validTxs$ and the
  correspondence between chains and logs. \emph{Log validity} holds by \emph{Chain validity} and by
  our choice of $\validTxs$, and \emph{Log finality} hods by \emph{Chain finality}.
\end{proof}

The reduction can be established in the other direction too.

\begin{lemma}
  \label{lem:bcfrc-reducible-dlsmr}

  BC-FRC is reducible to DL-SMR.
\end{lemma}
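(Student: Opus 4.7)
The plan is to produce a correct implementation of BC-FRC that uses the correct implementation of DL-SMR as a black box and that does not rely on partially synchronous communication. The simplest route is compositional. First, I would extract an RC implementation out of DL-SMR in the spirit of Lem.\ref{lem:bcrc-reducible-smr}: I let each replica's invocation of $\inputBlock(i,b)$ trigger the $\submit(b)$ primitive of DL-SMR, where the candidate block $b$ is treated as a single opaque entry for DL-SMR and the predicate $\validTxs$ is defined to coincide with $\validBlocks$ through the log-chain correspondence of \S\ref{sec:dlsmr} (extended so that a block is valid only when well-formed and pointing at the previous decided block). The resulting output sequence is the chain produced by RC.

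Second, once I have this RC implementation on top of DL-SMR, I would apply Lem.\ref{lem:frc-reducible-rc} to obtain a correct FRC implementation via Alg.\ref{alg:frc-replica}, and then apply Lem.\ref{lem:bcfrc-reducible-frc} to obtain a correct BC-FRC implementation via Alg.\ref{alg:frc-based-blockchain-client}--\ref{alg:frc-based-blockchain-replica}. Neither of these two compositions relies on partial synchrony (both use only the signed sub-proposal exchange and the reliable channels), so the overall construction yields a BC-FRC implementation that uses DL-SMR only as a black box and meets the reducibility rationale of Def.\ref{def:reduces-to}. The BC-FRC properties then follow: \emph{Agreement}, \emph{Chain validity} and \emph{Chain finality} come from \emph{Safety}, \emph{Log validity} and \emph{Log finality}; \emph{Termination} from \emph{User fairness}; \emph{Request agreement} from reliable channels; \emph{Valid request}/\emph{Valid input} from the client-side check and unforgeable signatures respectively; \emph{Stubborn input} by the FIFO mechanism of Alg.\ref{alg:frc-based-blockchain-replica}; and \emph{Fair fusion} and \emph{Fusion validity} by the construction of Alg.\ref{alg:frc-replica}, exactly as in Lem.\ref{lem:frc-reducible-rc}.

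The main obstacle lies in the first step, namely establishing \emph{Termination} of the RC extracted out of DL-SMR. Unlike the client-driven setting of Lem.\ref{lem:bcrc-reducible-smr}, here the ``clients'' of DL-SMR are the RC replicas, and \emph{User fairness} of DL-SMR will only commit each submitted block if it remains valid at submission time. My plan to discharge this is to let each replica, upon observing a new commit, re-submit a freshly constructed block pointing at the last decided block (so the submitted block is $\validBlocks$-valid by construction), and to appeal to \emph{User fairness} of DL-SMR to guarantee that at least one such block is eventually committed at each round. Care is needed to argue that this stays consistent with \emph{Chain validity} in the adversarial case where Byzantine replicas may contend with spurious submissions, but the argument mirrors the pathological-case discussion already used in the proof of Lem.\ref{lem:bcrc-reducible-smr}, thus closing the reduction.
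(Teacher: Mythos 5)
Your proposal is correct, but it takes a genuinely different route from the paper. The paper's proof is a direct one-step construction: upon $\request(\tx)$ the client simply triggers $\submit(\tx)$ of DL-SMR, the predicate $\validTxs$ is identified with $\validBlocks$ under the log--chain correspondence, and the externally observable BC-FRC properties are read off from \emph{Safety}, \emph{User fairness}, \emph{Log validity} and \emph{Log finality} (the paper's write-up of this property bookkeeping is terse and in fact mirrors the converse Lem.\ref{lem:dlsmr-reducible-bcfrc}, but the construction is clear); the fusion- and input-related properties are implicitly treated the way \emph{Valid request} and \emph{Valid input} are treated in Lem.\ref{lem:bcrc-reducible-smr}, namely as internal to the construction rather than observable at the interface. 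You instead factor the reduction as BC-FRC $\subseteq$ FRC $\subseteq$ RC $\subseteq$ DL-SMR, reusing Lem.\ref{lem:frc-reducible-rc} and Lem.\ref{lem:bcfrc-reducible-frc} verbatim and introducing the one missing link, RC $\subseteq$ DL-SMR, as a new sub-lemma. What your decomposition buys is that \emph{Fair fusion}, \emph{Fusion validity} and \emph{Stubborn input} are established operationally (the composed system really does aggregate $f+1$ signed sub-proposals via Alg.\ref{alg:frc-replica}) instead of being discharged as unobservable; what it costs is exactly the sub-lemma you flag as the main obstacle, which the paper's direct route never needs.

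On that obstacle, your sketch is workable but should be stated as a contradiction argument to be airtight. Since your extended $\validTxs$ accepts a submitted block only if it points at the last committed entry, a pending submission by a correct replica can become invalid only if the log grows past the position it targets. So suppose no block were ever committed at position $i$: the freshly re-submitted blocks of correct replicas would then remain valid forever, and \emph{User fairness} of DL-SMR would force each of them to be finalised at position $i$ --- a contradiction. Hence every position is eventually filled, and it is harmless for \emph{Chain validity} that the winning entry may come from a Byzantine submitter, because $\validTxs$ already enforces well-formedness, signatures and (as Alg.\ref{alg:frc-replica} requires of its underlying RC) the $f+1$-contribution check. With this spelled out, each layer of your composition uses only the black box below it plus asynchronous communication, so the reduction meets Def.\ref{def:reduces-to} by transitivity and the lemma follows.
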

\begin{proof}
  Assume a correct implementation of DL-SMR. We produce a correct implementation of BC-FRC that
  uses the implementation of DL-SMR as a black box and that does not rely on partially synchronous
  communication as follows. Upon the invocation of $\request(\tx)$ by a correct client, we let the
  client trigger the primitive $\submit(\tx)$ of DL-SMR. We let the predicate $\validTxs$ coincide
  with $\validBlocks$ when we consider that a chain corresponds to the log that results from
  appending the sequences of transactions in the blocks in the same order.
  
  \emph{User fairness} holds by Thm.\ref{thm:bcfrc-fairness}. The properties of \emph{Safety},
  \emph{Log validity} and \emph{Log finality} hold respectively by \emph{Agreement}, \emph{Chain
  validity} and \emph{Chain finality} and by our choice of $\validTxs$.
\end{proof}

Lem.\ref{lem:dlsmr-reducible-bcfrc}--\ref{lem:bcfrc-reducible-dlsmr} above help to show that DL-SMR
and BC-FRC solve essentially the same problem.

\begin{theorem}
  \label{thm:smr-bcfrc-equivalent}

  DL-SMR and BC-FRC are equivalent.
\end{theorem}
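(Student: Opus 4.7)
The plan is to observe that this theorem follows almost immediately by appealing to the definition of equivalence given in Definition \ref{def:reduces-to} of Section \ref{sec:reducibility}, which states that two distributed abstractions $A$ and $B$ are equivalent (written $A\equiv B$) iff $A$ reduces to $B$ and $B$ reduces to $A$. Hence the proof consists of invoking the two directions of the reduction, which have already been established in the preceding lemmas.

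Specifically, I would first invoke Lemma \ref{lem:dlsmr-reducible-bcfrc} to obtain that DL-SMR is reducible to BC-FRC, which gives one direction of the equivalence. Then I would invoke Lemma \ref{lem:bcfrc-reducible-dlsmr} to obtain that BC-FRC is reducible to DL-SMR, supplying the other direction. Combining both directions with Definition \ref{def:reduces-to} yields DL-SMR $\equiv$ BC-FRC, which is precisely the statement of the theorem.

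The main obstacle, if any, is not in this final proof itself, since it is a one-line composition of the two previous lemmas. The substantive work lies in the lemmas already proved: Lemma \ref{lem:dlsmr-reducible-bcfrc} relies crucially on Theorem \ref{thm:bcfrc-fairness}, which is where \emph{User fairness} for DL-SMR is derived from the \emph{Request agreement}, \emph{Stubborn input}, \emph{Fair fusion}, \emph{Fusion validity} and \emph{Chain finality} properties of BC-FRC. The other direction, Lemma \ref{lem:bcfrc-reducible-dlsmr}, is conceptually easier since the SMR abstraction already provides a strong enough primitive to mimic BC-FRC's externally observable behaviour once the correspondence between logs and chains (viewing a chain as the concatenation of the transaction sequences of its blocks) is fixed. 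Thus my plan would be to present the theorem's proof as a short paragraph that appeals to both lemmas and the definition of equivalence, without further calculation.
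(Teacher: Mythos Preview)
Your proposal is correct and matches the paper's own proof, which simply states that the theorem is a direct consequence of Lem.\ref{lem:dlsmr-reducible-bcfrc} and Lem.\ref{lem:bcfrc-reducible-dlsmr}. Your additional remarks about where the real work lies (in particular the reliance of Lem.\ref{lem:dlsmr-reducible-bcfrc} on Thm.\ref{thm:bcfrc-fairness}) are accurate and consistent with the paper's development.
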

\begin{proof}
  Direct consequence of Lem.\ref{lem:dlsmr-reducible-bcfrc}--\ref{lem:bcfrc-reducible-dlsmr}.
\end{proof}

Figure~\ref{fig:relation-dlsmr-bcfrc-frc-rc} summarises the content of
Lem.\ref{lem:frc-reducible-rc}--\ref{lem:bcfrc-reducible-frc},
Lem.\ref{lem:dlsmr-reducible-bcfrc}--\ref{lem:bcfrc-reducible-dlsmr},
Thm.\ref{thm:bcfrc-fairness}, and Thm.\ref{thm:smr-bcfrc-equivalent}

Our solution relies on five conditions that encompass the issuing
of transactions by clients, the collection of these transactions into blocks by replicas,
the input of a block to the RC service at each replica, and the output of the decided
block at each replica. More precisely, our solution requires the following:
\begin{enumerate}[label=(\arabic*)]
\item All correct replicas eventually receive all the transactions issued by correct
  clients.
\item The correct replicas collect the transactions they receive in order, and they will
  stubbornly input blocks that contain these transactions in receive order, until each of
  the transactions is either finalised or has become invalid with respect to the current
  chain.
\item \label{cond:two} At each correct replica, the RC service may aggregate inputs coming
  from different contributors (\ie, replicas) into a single block. In the vein of vector
  consensus \cite{DS97}, a replica will only output a decided block if it has been contributed by
  $f+1$ or more replicas,\footnote{\emph{Vector validity} requires the decided vector to contain
  $2f+1$ or more non-null elements, but for our purposes $f+1$ contributors is enough.}  which
  guarantees that the decided block is contributed by at least one correct replica.
\item We generalise vector consensus by considering a deterministic, application-dependent
  function $\fusion$ that takes several inputs and aggregates them into a single block. We
  require the decided block in \ref{cond:two} above to coincide with the aggregated block
  that results from applying $\fusion$ to the inputs of $f+1$ or more contributors. We
  make no particular assumptions on the order in which $\fusion$ places each of the
  transactions in the aggregated block, but we require that $\fusion$ does not drop any
  transaction unless including it in the aggregated block would make the block invalid
  with respect to the current chain. This provides great flexibility since the $\fusion$
  function can capture different existing models for \emph{order-fairness}
  \cite{KMZGSJA20}. \AG{Add citations for order-fairness, from \cite{KMZGSJA20}.}
\item The transactions that are inserted into the blockchain are never revoked. This
  requirement is always true for the BC-RC constructions in the permissioned setting that
  we consider, since the RC service is assumed to ensure absolute finality.
\end{enumerate}
\AD{here the idea is to give the intuitions in a simple way, too check if there are things
  that can be made more informal and easy to grasp}

Implementations of BC-RC abound \cite{Gramoli17,Sousa18,libraSMR, tendermint} and the
result stated in Thm.\ref{thm:smr-bcfrc-equivalent} above and the generic BC-FRC suggest
avenues for implementing DL-SMR in the blockchain setting. However, in order to apply this
result broadly, the first thing to do is to ensure that the consensus algorithm at the
core of existing blockchains implements FRC. To this end, the next section introduces
transformations that deliver a correct implementation of BC-FRC given an existing
implementation of BC-RC.

This section introduces a transformation to obtain BC-FRC from a correct implementation
of RC. %
More into details, this transformation turns a correct implementation of RC into a correct implemeantion
of FRC, by aggregating the sub-proposals that a contributor hears of into a single
proposal before inputing it to RC. The obtained implmentation of FRC could be used to
obtain BC-FRC in a way analogous to Alg.\ref{alg:frc-based-blockchain-replica} of
\S\ref{sec:relation-rc-smr}. 

\AG{Explain this transformation only qualitatively, and comment on block size.}

\section{Related work}\label{sec:related-work}

\subparagraph*{Blockchains and SMR} The SMR problem, proposed in \cite{lamport78,SMR} and
generalized in \cite{SMR}, gained interest since Castro and Liskov \cite{pbft} showed how
to practically implement it in an eventually synchronous system in presence of Byzantine
processes. Blockchains can be seen as a way to implement SMR, especially when they support
smart contracts \cite{Wood14}, \ie, pieces of executable code stored and executed in a
replicated way in the blockchain. Nonetheless, to the best of our knowledge, no current
market blockchain solution provides complete support for SMR. Libra \cite{libraSMR} is one
of the rare cases in which the blockchain is expressed as a construction over SMR, however
the proposed SMR specification is restricted and does not encompass user
fairness.\footnote{In Libra SMR is expressed in terms of safety---all honest nodes observe
  the same sequence of commands---and liveness---new commits are produced as long as valid
  commands are submitted.}  Interestingly, formalisations of known blockchains put
emphasis on the replicated data structure and its maintenance, but leave aside user
fairness \cite{Bt-adt,AntaKGN18,tenderbake,GarayKL15}.

\subparagraph*{Fairness in blockchains}
Fairness is generally desirable when participants of a blockchain have some stake into the
system. An example of fairness can be found in the mutual exclusion problem, where
fairness demands that all processes requiring to access the critical section must be
eventually served \cite{dijkstra1965,panettiere}. In the blockchain context the notion of
fairness has mainly the connotations of
\begin{itemize}
\item[---] \emph{miner fairness}, which requires that processes that maintain the
  blockchain are rewarded proportionally to their amount of work, and
\item[---] \emph{user fairness}, which requires that user requests are eventually served
  by the blockchain, in the spirit of fairness for mutual exclusion.
\end{itemize}
\AG{Consider removing the pragraph and saying taht we focus on user fairness? Do not
  mistake our user fairness with previous notion of miner fairness or chain quality (keep
  citations).} One of the first analysis in blockchain systems on miner fairness is the
study of \emph{chain quality}, which is defined by Garay \etal\ \cite{GarayKL15} as the
proportion of blocks mined by honest miners in any given window. Other works that explored
miner fairness are Ourobouros \etal\ \cite{KRDO17} and Fruitchain \cite{PS17}. Although
miner fairness can potentially affect user fairness, it is not a necessary condition to
achieve user fairness, and therefore these previous works on chain quality do not close
the gap that this paper focuses on.

Regarding user fairness, Gürcan \etal\ \cite{GRT18,GDT17} study Bitcoin and show that once
a user issues a transaction (a request to the blockchain) there is no guarantee that such
transaction will be committed to the blockchain. Herlihy and Moir \cite{HM16} focus on
blockchains based on BFT consensus---such as Tendermint and Libra---and discus how
malicious participants could violate user fairness by censoring transactions. Indeed, this
is a common problem in such blockchains because the decided block for a given consensus
instance could have been proposed by a Byzantine replica. To overcome this problem, in
\cite{HM16} they propose modifications to Tendermint's algorithm to make user-fairness
violations detectable and accountable.

To the best of our knowledge, Fairledger \cite{fairledger} is one of the first blockchains
based on synchronous BFT consensus that explicitly provides user fairness. User fairness
is guaranteed because at each consensus instance all the proposals are batched in a sole
input value, so that either all of the transactions issued by clients are committed, or
none of them are. In the same spirit, Honeybadger \cite{honeybadger} uses an asynchronous
component that implements randomized agreement, and batches together proposals from
different replicas before inputting the aggregated proposal to the randomized agreement
service. \AG{Does this sentence refer to an asynchronous component of Honeybadger that is
  fed with aggregated proposals? I failed to understand the original sentence.}  Moreover,
they provide a probabilistic bound on the delay within which a request is committed, after
the request has been received by enough correct processes.

Recently, in a parallel and independent work, Crain \etal\ \cite{CNG21} propose a solution similar to ours to achieve censorship-resistance in Red Belly. They aggregate the ``valid and non-conflicting transactions'' coming from $f+1$ proposers into a `superblock' that is later input to the consensus module of Red Belly. They also introduce an abstraction for \emph{Set Byzantine Consensus} which resembles our FRC, but their abstraction is specific for Red Belly's cryptocurrency application since their notion of validity captures only that the transactions in the decided block do not exhaust the balance of any account. Our \emph{Valid fusion} and \emph{Fusion validity} properties allow for any generic-purpose application while capturing different existing approaches for fair-orderness \cite{KMZGSJA20}. We introduce a definition of SMR with a rigorous \emph{User fairness} property and prove that our solution achieves SMR.

\subparagraph*{Techniques to achieve users fairness in BC-RC} The approaches in both
Fairledger and Honeybadger take inspiration from techniques emerged in the context of
reductions of SMR to consensus. \AG{In Fairledger the decision is atomic, either all or
  none.} For instance, Honeybadger is an improvement in terms of communication complexity
of the asynchronous atomic broadcast solution from Cachin \etal\ \cite{CKPS01}. In
\cite{CKPS01} they show how to reduce atomic broadcast to consensus provided that the
latter enjoys an external validity property that ensures that the decided value satisfies
a validity predicate that can be verified locally by the replicas. Such an external
valdity property opens up for the possibility that a value complying with the validity
predicate is decided, even if proposed by a Byzantine replica. This situation is customary
in the blockchain setting, where a valid block (with respect to a local predicate) can be
commited to the chain even if produced by a Byzantine process. The relevance of their work
comes also from the fairness property provided by the atomic broadcast abstraction, which
ensures that a payload message $m$ is scheduled and delivered within a reasonable
(polynomial) number of steps after it is atomically broadcast by an honest process. The
techniques they use are in the spirit of Doudou and Shiper's solutions for certified
consensus and for atomic broadcast \cite{DS97}. \AG{What about Fairledger?}

Milosevic \etal\ provide a thorough survey of reductions of atomic broadcast to RC in the
Byzantine model, and analyse the validity property of different variants of RC to show
which of them are equivalent to atomic broadcast \cite{Milosevic11}. \AG{Explore relation
  between external and internal validity.}

\AG{A new paragraph for the blockchains?}

Sousa and Bessani \cite{SB12} provide an interesting solution of SMR in terms of an
implementation of RC that, despite Byzantine replicas, guarantees that all operations from
correct clients are always executed. To do so, they adapt the PBFT protocol \cite{pbft,
  pbftsmart} in such a way that a leader change is triggered when a client transaction is
not committed after a certain timeout, a fact that hints at the leader possibly being
Byzantine. Note that Sousa and Bessani's solution is not generic in the sense that it
depends on the specific implementation of RC that they consider, while our generic
solution applies to any implementation of RC and therefore it could be applied to any
existing blockchain.
  \AG{Rephrase for precision and clarity, use
  the following ``The pipelining optimisation takes a BC-RC as a white box, since it
  modifies the implementation of the underlying RC, and delivers a DL-SMR. This reduction
  is agnostic to the consensus protocol, while Sousa and Bessani's is not''.} In a similar vein, Prime \cite{prime} guarantees that once a correct replica gets to know about a transaction
  then such transaction is committed in a timely manner. Do to so, a Pre-Ordering phase precedes the 
  PBFT consensus in charge to perform a global ordering on the transactions. Still in the
SMR context, Bazzi and Herlihy propose Clairvoyant \cite{Clairvoyant}, a new solution to
the problem in the form of a protocol for the generalized consensus problem
\cite{Lamport05}. \AG{Improve precision of this sentence.}  Interestingly, it provides
liveness for all transactions under eventual synchrony and at all times for transactions
that do not overlap with conflicting transactions. \AG{Conflicting transactions is a
  slippery concept, we must define it formally before stating anything of this kind.} The
solution leverages on non-skipping time stamps, which requires clients to interact with
replicas to get the latest time stamp they observed before issuing a transaction. Their
solution assumes $4f+1$ or more replicas, where $f$ is the failure threshold. %
Close to us, there is a very recent work \cite{zhang2020} that investigates how to avoid both transaction censorship and reordering despite the presence of Byzantine leaders proposing a block. To do so, when adding each block, two extra phases precede the consensus instance. Which, interestingly, can be any leader based consensus solution. Contrarily to us, \cite{zhang2020}, provides ordering but at the price of extra phases for each block creation. In our case we provide censorship resistance without transaction ordering guarantees adding only one extra phase for any leader based consensus solution. \AD{to recheck} 
\AG{Use the necessary conditions to show that popular blockchains do not meet user
  fairness. Consider moving related work to the end and merge it with conclusions and
  future work.}

\AG{Perhaps mention that we provide (definite) User fairness and say that the
  probabilistic user fairness porperty could be explored too?}


\section{Conclusion}\label{sec:conclusion}

In this work we analysed how to achieve state machine replication in blockchains based on repeated consensus. Our work systematises the huge body of work on state machine replication  and its relationship with well-known distributed system problems such as Consensus and Atomic Broadcast. Our systematisation proposes formal abstractions to capture state machine replication and consensus implemented in blockchains, i.e. DL-SMR and BC-RC.  We pointed out that what is missing in BC-RC to achieve DL-SMR is the user fairness property, stipulating that transactions issued by correct participants are eventually committed, if valid. 
We further introduced  the FRC abstraction and a parametric BC-FRC construction, which enjoys
user fairness, and state formally the equivalence between DL-SMR and BC-FRC.  Notably our FRC abstraction relies on a fusion function and a validity predicate that  could be instantiated to cover a varied spectrum of order-fairness models. 
Finally, we
showed how to obtain DL-SMR out of RC in an agnostic way to the underlying consensus
protocol and then truly modular. 



\bibliographystyle{ACM-Reference-Format}
\bibliography{main}


\end{document}